\DeclareMathAccent{\wtilde}{\mathord}{largesymbols}{"65}
\DeclareMathAccent{\what}{\mathord}{largesymbols}{"62}
\def\m@th{\mathsurround=0pt}
\mathchardef\bracell="0365 
\def\upbrall{$\m@th\bracell$}
\def\undertilde#1{\mathop{\vtop{\ialign{##\crcr
    $\hfil\displaystyle{#1}\hfil$\crcr
     \noalign
     {\kern1.5pt\nointerlineskip}
     \upbrall\crcr\noalign{\kern1pt
   }}}}\limits}
\newcommand{\wb}[1]{\overline{#1}}
\newtheorem{Def}{Definition}
\newtheorem{Pro}{Proposition}
\newtheorem{Lem}{Lemma}
\def\wb{\accentset{{\cc@style\underline{\mskip10mu}}}}
\begin{document}

\title[On the Lagrangian formulation of multidimensionally consistent systems]{On the Lagrangian formulation of multidimensionally consistent systems}
\author[P. Xenitidis, F. Nijhoff \& S. Lobb]{Pavlos Xenitidis, Frank Nijhoff \& Sarah Lobb}
\affiliation{Department of Applied Mathematics, University of Leeds, Leeds LS2 9JT, UK}
\label{firstpage}
\maketitle

\begin{abstract}{Integrable systems, multidimensional consistency, Lagrangians, symmetries}
Multidimensional consistency has emerged as a key integrability property for partial difference equations (P$\Delta$Es) 
defined on the ``space-time'' lattice. It has led, among other major insights, to a classification of scalar affine-linear quadrilateral 
P$\Delta$Es possessing this property, leading to the so-called ABS list. Recently, a new variational principle has been 
proposed that describes the multidimensional consistency in terms of discrete Lagrangian multi-forms.  This description 
is based on a fundamental and highly nontrivial property of Lagrangians for those integrable lattice equations, namely the 
fact that on the solutions of the corresponding P$\Delta$E the Lagrange forms are closed, i.e. they obey a {\it closure relation}.  Here we extend those results to the continuous case: it is known that associated with the integrable P$\Delta$Es there exist 
systems of PDEs, in fact differential equations with regard to the parameters of the lattice as independent variables, 
which equally possess the property of multidimensional consistency. In this paper we establish a universal Lagrange structure 
for affine-linear quad-lattices alongside a universal Lagrange multi-form structure for the corresponding continuous PDEs, and 
we show that the Lagrange forms possess the closure property. 
\end{abstract}

\section{Introduction}

The study of integrable systems on the space-time lattice has, over the past decades, developed into a major area of research. The early 
examples of integrable lattice systems, i.e. partial difference equations (P$\Delta$Es) on two- or higher-dimensional grids, go back to the mid 
1970s and early 1980s, when the research was focused on discretizing known continuous soliton systems (Ablowitz \& Ladik 1976, Hirota 1977, Date {\it et al.} 1982,1983, Nijhoff {\it et al.} 1983, Quispel {\it et al.} 1984). However, in recent years the focus has shifted to studying these equations 
on their own merit, since they exhibit the key integrability aspects in a more lucid way than their continuous counterparts, and often 
can be thought of as more generic systems than the corresponding differential equations which can be retrieved by appropriate continuum limits\footnote{It should be pointed 
out that only very special and non-trivial discretizations of a given integrable differential equation will retain the key integrability properties of the latter.}. 

It has emerged that one of the key integrability aspects of lattice equations is the property of multidimensional consistency, i.e., the property that one can impose
P$\Delta$Es of the same form (differing only by a choice of the so-called lattice parameters) in different pairs of perpendicular directions of a multidimensional lattice in which 
the two-dimensional grids are embedded (Nijhoff \& Walker 2001, Bobenko \& Suris 2002). This phenomenon can be understood as the analogue of the existence of infinite 
hierarchies of PDEs associated with an integrable nonlinear evolution equation, such as the KdV equation. In some sense multidimensional consistency implies that the 
``integrable system'' is not really a single P$\Delta$E but comprises all members of the parameter-family of equations compatible on the multidimensional lattice 
(Nijhoff {\it et al.} 2001). Using this property a classification of two-dimensional scalar integrable lattice systems was given, in the affine-linear case, by Adler, 
Bobenko and Suris (2003) resulting in what is hereafter referred to as the ABS list. Although we shall refer to the equations in the list as the ``ABS equations'', which are given in the Appendix, 
several examples in the list, notably lattice systems of KdV type (associated with discretizations of the KdV and related PDEs), were already known in the literature, 
cf. Nijhoff \& Capel (1995) and references therein. The most general equation of the ABS list was found by Adler (1998) and it contains lattice parameters which are 
points on an elliptic curve, cf. also Adler \& Suris (2004).   

Considering the integrable system as a system of many equations imposed simultaneously on one and the same (possibly scalar) dependent variable, raises the question 
whether there exists a canonical description which encodes the whole system of equations. Even though most continuous integrable equations can be cast into Lagrangian form (Zakharov \& Mikhailov 1980), the situation for discrete equations is distinctly more complicated and, in fact, the first Lagrangian for a quadrilateral equation was given in Capel {\it{et al.}} (1991). Furthermore,
the usual least-action principle only provides one variational equation per component of the dependent variable. Lobb \& Nijhoff (2009) provided the first proposal 
on how to achieve a variational description of the multidimensionally consistent system of lattice equations, based on the idea of Lagrangian multi-forms. In fact, 
the proposal is to consider action functionals in terms of discrete Lagrange 2-forms $\mathcal{L}$ given by
 \begin{equation}\label{S} 
  S = S[u(\boldsymbol{n});\sigma] = \sum_\sigma{\cal{L}} = \sum_{\sigma_{ij}(\boldsymbol{n})\in\sigma}{{\cal{L}}_{ij}(\boldsymbol{n})},
 \end{equation} 
 where $u(\boldsymbol{n})$ is a field configuration of a dependent variable $u$ depending on lattice sites labelled by $\boldsymbol{n}\in\mathbb{Z}^N$ of a lattice of 
 arbitrary dimension $N>2$ and $\sigma$ denotes a quadrilateral surface, composed of a connected configuration of quadrilaterals, embedded in the $N$-dimensional grid. 
 The least-action principle formulated in Lobb \& Nijhoff (2009) imposes that $S$ attains a critical point not only with regard to the infinitesimal variations of 
 the field variables $u(\boldsymbol{n})$ for a given chosen surface $\sigma$ , but also with regard to the geometry of the independent variables, by imposing 
 surface independence of the action for the solutions of the Euler-Lagrange (EL) equations. The latter requirement is equivalent to stating that the Lagrangians 
 $\mathcal{L}_{ij}$, viewed as 2-forms associated with the quadrilaterals that compose the surface, have to be closed forms subject to the EL equations. It is this property that makes the choice of Lagrangians {\em admissable} and which immediately implies multidimensional consistency of the system of equations arising 
 from the EL equations on each choice of surface $\sigma$ (fixing, if necessary, boundaries). Remarkably, this closure property could be established for well-chosen 
 Lagrangians for equations in the ABS list, either by direct computation (as was done in Lobb \& Nijhoff 2009) or by general principles (Bobenko \& Suris 2010). 
 Furthermore, Lagrangians possessing the closure property were found in the case of multi-component systems (Lobb \& Nijhoff 2010) and higher-dimensional systems of 
 KP type (Lobb {\it et al.} 2009). 
 
The phenomenon of multidimensional consistency is not restricted to the discrete case of lattice equations. In Nijhoff {\it et al.} (2000) the 
novel class of non-autonomous PDEs was introduced which are the equations of the same dependent variable appearing in the lattice equations but viewed now as a function 
of a pair of (continuous) lattice parameters, denoted by $\alpha_i,\alpha_j$. The so-called {\emph{generating PDE}} of the KdV hierarchy, i.e. a PDE which generates all the
equations in the KdV hierarchy by systematic expansion with respect to the lattice parameters, arises as the EL equations ~$\delta_u\mathcal{L}=0$~ from a simple 
Lagrangian given by   
  \begin{equation}
  {\cal{L}} = \frac{1}{2}(\alpha_i-\alpha_j)\frac{\left(\partial_{\alpha_{i}} \partial_{\alpha_{j}} u\right)^2}{\partial_{\alpha_i} u\, \partial_{\alpha_j} u} + \frac{1}{2(\alpha_{i}-\alpha_{j})}\biggl(n_{j}^2
  \frac{\partial_{\alpha_{i}}u}{\partial_{\alpha_{j}} u}+n_{i}^2\frac{\partial_{\alpha_{j}} u}{\partial_{\alpha_{i}} u}\biggr). \label{eq:scal-Lag}
 \end{equation} 
In Lobb \& Nijhoff (2009) it was also shown that this continuous Lagrangian can be viewed as a closed 2-form, on solutions of the 
 EL equations, subject to a three-dimensional constraint on the solutions\footnote{In fact, this constraint is equivalent to stating that the solutions 
 obey a KP type of equation for the embedding in a higher dimension, which holds true for known infinite families of 
 solutions such as soliton solutions.}. Thus, a similar multidimensional least-action principle in terms of a continuous action functional 
  \begin{equation}
  S[u(\boldsymbol{\alpha});\sigma] \,= \,\int_\sigma\!\!{\mathcal{L}} \,=\,\int_{\sigma} \sum_{ i<j} {{\cal{L}}_{ij}\, \rd\alpha_{i}\wedge \rd\alpha_{j}}
 \end{equation} 
which involves continuous Lagrange 2-forms $\mathcal{L}$ depending on the fields $u(\boldsymbol{\alpha})$ and on surfaces $\sigma$ with coordinates given by the vector ${\boldsymbol{\alpha}}=(\alpha_1,\ldots,\alpha_N)$, where $i,j=1,\dots,N$, can be formulated in the continuous case for certain PDEs associated with KdV type lattice equations. In order to avoid the difficulty stemming from the fact 
 that the scalar Lagrangian \eqref{eq:scal-Lag}, and consequently also the EL equation, is of higher order it is natural to formulate the Lagrange structure to 
 that for a system of lower-order equations, which actually emerge naturally from the corresponding Lax pair. In fact, such systems of PDEs were subsequently 
 established for all equations in the ABS list through symmetry analysis (Tsoubelis \& Xenitidis 2009) and as a specific  
 example of the construction we can mention the following coupled system of PDEs, corresponding to H1, H2 and Q1: 
\begin{eqnarray}\label{H1etc-sys}
\frac{\partial u_i}{\partial \alpha_j} &= & \frac{u_i\,-\,u_j}{\alpha_i\,-\,\alpha_j}\,\left(n_j\,-\,(u_i\,-\,u_j)\,\frac{\partial u}{\partial \alpha_j}\right)\,+\, \delta^2(\alpha_i-\alpha_j)\,\frac{\partial u}{\partial \alpha_j}\,, \nonumber \\
\frac{\partial^2 u}{\partial \alpha_i \partial \alpha_j} &=& 2\,\frac{u_i\,-\,u_j}{\alpha_i\,-\,\alpha_j} \frac{\partial u}{\partial \alpha_i} \frac{\partial u}{\partial \alpha_j}\,+\,\frac{n_i}{\alpha_i\,-\,\alpha_j} \frac{\partial u}{\partial \alpha_j}\,+\,\frac{n_j}{\alpha_j\,-\,\alpha_i} \frac{\partial u}{\partial \alpha_i}\,.
\end{eqnarray}
It was derived in Tsoubelis \& Xenitidis (2009) along with its Lax pair, where $u$, $u_i$ and $u_j$ denote a triple of dependent variables. It is worth mentioning that system (\ref{H1etc-sys}) with $\delta =0$ was first given in Nijhoff {\it{et al.}} (2000) in relation to H1 and was derived from a reduction of the anti-self dual Yang-Mills equations in Tongas {\it{et al.}} (2001). In the present paper we will derive the Lagrange structure for the general class of systems of the form 
~${\cal L}(u,u_i,u_j;\partial_{\alpha_i}u,\partial_{\alpha_j}u,\partial_{\alpha_i}u_j,\partial_{\alpha_j}u_i)$~ such that the EL equations, when varying 
with respect to all three dependent variables, are satisfied on solutions of the systems like \eqref{H1etc-sys}. In this case the Lagrangian reads   
\begin{eqnarray}
{\cal{L}}_{ij} &=& \frac{\partial u}{\partial \alpha_i} \frac{\partial u_i}{\partial \alpha_j}- 
\frac{\partial u}{\partial \alpha_j} \frac{\partial u_j}{\partial \alpha_i}\,-\,\left(\delta^2 (\alpha_i-\alpha_j)\,
-\,\frac{(u_i-u_j)^2}{\alpha_i-\alpha_j}\right)\frac{\partial u}{\partial \alpha_i} \frac{\partial u}{\partial \alpha_j}\nonumber \\
& & -\,n_j\,\frac{u_i-u_j}{\alpha_i-\alpha_j}\,\frac{\partial u}{\partial \alpha_i} \,+ \,n_i\,\frac{u_j-u_i}{a_j-a_i} \,\frac{\partial u}{\partial \alpha_j} \label{H1etc-sys-Lagr}
 \end{eqnarray}
and one can show directly, without assuming further constraints on the higher-dimensional embedding, that this Lagrangian obeys the continuous version of the closure 
relation 
\begin{equation}
\partial_{\alpha_k}{\cal L}_{ij}+\partial_{\alpha_i}{\cal L}_{jk}+\partial_{\alpha_j}{\cal L}_{ki}=0 \,. 
\end{equation}
The organization of the paper is as follows. We first give in Section 2 a universal description of the Lagrange structure of affine-linear quadrilateral equations. 
What is important here is that the Lagrangians can actually be systematically derived from some basic assumptions, such as a 3-point form of the Lagrange function and 
the Kleinian symmetry of the quadrilateral equation. This leads to an integral form of the so-called three-leg representation of the quadrilateral equation, cf. 
Adler {\it et al.} (2003). In Section 3, we specialize to the ABS case, in which we have the symmetries of the square, imposing dependence on lattice 
parameters in a covariant way. We prove that the corresponding Lagrangians, in the integral form, obey the closure property. In Section 4, we briefly present the construction of the universal system of PDEs associated with the ABS list, which was given in Tsoubelis \& Xenitidis (2009), and we prove the multidimensional consistency of these systems. 
Finally in Section 5 we establish the corresponding Lagrangian structures, both for the systems of PDEs as well as for the differential-difference equations which 
are defined in terms of the symmetry generators, and show that the universal Lagrange structure for the system of PDEs obey the closure relation. 
Thus, we believe this completes the full picture of the Lagrange structure for multidimensionally consistent equations associated with the affine-linear scalar quad-lattices.

\section{Lagrangian formulation of 2D quad-lattices} \label{sec-Klein}

We first focus on the Lagrangian formulation of affine-linear quadrilateral equations on the two-dimensional lattice, and show that under very general assumptions there 
is an almost unique Lagrangian description. The members of the class of equations we consider will be denoted by 
\begin{equation}
Q(u,u_i,u_j,u_{ij})\,=\,0\,, \label{gen-eq}
\end{equation}
and involve the values of a function $u$ at four neighbouring points of an elementary quadrilateral. The labels $i$, $j$ refer, for the time being, to two 
fixed directions of a ${\mathds{Z}}^2$ lattice.  

The general notation we use here, adapted to cover also the situation we will encounter in the next section, is as follows. In principle, the unknown function 
$u$ of the equations will, in due course, be considered as functions of an 
arbitrary number of discrete variables, denoted by $n_i$, and there will be an equal number of continuous variables $\alpha_i$, i.e. 
$u = u(n_1,n_2,\ldots;\alpha_1,\alpha_2,\ldots)$ will be the dependent variable of a system of equations in an arbitrary number of 
dimensions. In that context, the shifted values of $u$ will be denoted by indices, e.g.
$$\begin{array}{l} u_i\,\equiv\,{\cal{S}}_i(u)\,:=\,u(\ldots,n_i+1,\ldots;\alpha),\\ u_{-j}\,\equiv\,{\cal{S}}_j^{-1}(u)\,:=\,u(\ldots,n_j-1,\ldots;\alpha),\\ 
u_{ij}\,\equiv\,{\cal{S}}_i{\cal{S}}_j(u)\,:=\,(\ldots,n_i+1,\ldots,n_j+1,\ldots;\alpha),\end{array}$$
where ${\cal{S}}_i$ is the shift operator in the $i$ direction. Occasionally we will also use the difference operator $\Delta_i$, which is defined by $\Delta_i\,:=\,{\cal{S}}_i\,-\,1$. 
The main integrability property of the systems considered in the following sections is multidimensional consistency. This means that all of them can be extended in 
a compatible way in higher dimensions, according to the following definition: 

\begin{Def} \label{MDC-dis}
Let $Q(u,u_i,u_j,u_{ij})=0$ denote a quadrilateral difference equation, to be solved for a dependent variable $u$. Imposing the dependence of $u$ on a third discrete variable $n_k$ such that
$$Q(u,u_j,u_k,u_{jk})=0\,,\quad Q(u,u_k,u_i,u_{ik})=0\,,$$
if the three different ways to evaluate the triply shifted value $u_{ijk}$ lead to the same result, then we call the equation defined by $Q$ multidimensionally consistent. 
\end{Def} 

However, in the present section we will only make statements about a single equation of the form \eqref{gen-eq}, in terms of two fixed independent variables $n_i$, $n_j$, 
and their corresponding lattice shifts. The following properties of the function $Q$ are assumed here: 
\begin{enumerate}
\item[1)]
$Q$ is an affine-linear polynomial depending explicitly on all the arguments, that is 
$\partial_u Q \,\partial_{u_i} Q \,\partial_{u_j} Q \,\partial_{u_{ij}} Q \,\ne\,0$ and
$\partial_u^2 Q \,=\,\partial_{u_i}^2 Q\,=\,\partial_{u_j}^2 Q\,=\,\partial_{u_{ij}}^2 Q\,=\,0$. 
\item[2)] 
$Q$ is irreducible, meaning that $Q$ cannot be factorized and presented as a product of two polynomials; 
\item[3)] 
$Q$ possesses the Kleinian symmetry, i.e., 
\begin{equation}
Q(u,u_i,u_j,u_{ij})\,=\,\epsilon\, Q(u_i,u,u_{ij},u_{j})\,=\,\sigma \, Q(u_j,u_{ij},u,u_i)\,, \label{gen-D4sym}
\end{equation}
where $\epsilon = \pm 1$ and $\sigma = \pm 1$.
\end{enumerate}

If $Q$ satisfies Property 1 one can define, in principle, six different bi-quadratic polynomials in terms of $Q$ and its derivatives, and four quartic polynomials 
through the application of the following double-sided Wronskian, respectively discriminant operator, defined by their actions on an arbitrary polynomial function 
$f = f(x,y)$ in the following ways, eg. Adler {\it et al.} (2009): 
\begin{equation}
{\cal{D}}_{x,y}(f)\,:=\,\left| \begin{array}{cc} f & \partial_x f \\ \partial_y f & \partial_x \partial_y f \end{array} \right|\,,  
\quad {\cal{D}}_x(f)=(\partial_x f)^2-2 f\partial_x^2 f\,.
\label{determ}
\end{equation}

Furthermore, the irreducibility of $Q$ guarantees that none of the polynomials obtained from $Q$ by acting on it by the operator ${\cal D}_{x,y}$ for all pairs of arguments 
$x,y$ of $Q$, is identically zero. The Kleinian symmetry (\ref{gen-D4sym}) implies that these bi-quadratic polynomials are symmetric and take the same form 
when we apply the operator ${\cal D}_{x,y}$ for mutually exclusive pairs of arguments. This is expressed by the following Lemma, which can be easily established by 
exploiting the definitions and symmetry relations: 
\begin{Lem} 
Due to the Kleinian symmetry \eqref{gen-D4sym}, the following polynomials assigned to an elementary quadrilateral are the same for functions depending 
on complementary pairs of variables:    
\begin{enumerate}
\item symmetric bi-quadratic polynomials assigned to the edges of the quadrilateral: 
\begin{eqnarray}
h(u,u_i) &:=& {\cal{D}}_{u_j,u_{ij}}(Q)\,,\quad h(u_j,u_{ij}) := {\cal{D}}_{u,u_i}(Q)\,, \nonumber \\
\bar{h}(u,u_j) &:=& {\cal{D}}_{u_i,u_{ij}}(Q) \,,\quad \bar{h}(u_i,u_{ij}) := {\cal{D}}_{u,u_j}(Q)\,, \label{poly-h-def}
\end{eqnarray}
 \item symmetric bi-quadratic polynomials assigned to the diagonals of the quadrilateral: 
\begin{equation} 
H(u_i,u_j) \,:=\, {\cal{D}}_{u,u_{ij}}(Q)\,,\quad
H(u,u_{ij}) \,:=\, {\cal{D}}_{u_i,u_{j}}(Q)\,. \label{poly-G-def}
\end{equation}
\item quartic polynomials associated with the vertices of the quadrilateral:
\begin{equation}
r(u)=: {\cal{D}}_{u_i}(h(u,u_i))={\cal D}_{u_j}(\bar{h}(u,u_j))={\cal D}_{u_{ij}}(H(u,u_{ij})) \,.     
\end{equation}  
\end{enumerate}
Furthermore, between the bi-quadratic polynomials the following relations hold on solutions of equation (\ref{gen-eq}).
\begin{equation}
h(u,u_i)\,h(u_j,u_{ij})\,=\,\bar{h}(u,u_j)\,\bar{h}(u_i,u_{ij})\,=\,H(u,u_{ij})\,H(u_i,u_j)\label{hhG-rel}.
\end{equation}
\end{Lem}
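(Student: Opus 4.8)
The plan is to reduce everything to two elementary facts about the operator $\mathcal{D}_{x,y}$ acting on an affine-linear polynomial, after which the Kleinian symmetry \eqref{gen-D4sym} does all the bookkeeping. First I would fix any pair of arguments, say $u_j,u_{ij}$, and use Property~1 to write $Q = A + B\,u_j + C\,u_{ij} + D\,u_j u_{ij}$ with $A,B,C,D$ polynomials in the complementary pair $u,u_i$. A one-line computation then gives $\mathcal{D}_{u_j,u_{ij}}(Q) = AD - BC$, which is manifestly independent of $u_j,u_{ij}$; hence each quantity in \eqref{poly-h-def} and \eqref{poly-G-def} is genuinely a bivariate polynomial in the complementary pair, as the notation already presumes. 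The same expansion gives the second fact I would record: on the solution variety $Q=0$ one has $AD-BC = -(B+Du_{ij})(C+Du_j) = -\partial_{u_j}Q\,\partial_{u_{ij}}Q$, and more generally $\mathcal{D}_{x,y}(Q)\big|_{Q=0} = -\partial_x Q\,\partial_y Q$ for every pair $(x,y)$.

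Granting these, the identifications in parts~1 and~2 become pure symmetry. Each nontrivial element of the Kleinian group is a variable permutation $\phi$ (the horizontal flip $u\!\leftrightarrow\!u_i,\ u_j\!\leftrightarrow\!u_{ij}$; the vertical flip $u\!\leftrightarrow\!u_j,\ u_i\!\leftrightarrow\!u_{ij}$; and their product), under which \eqref{gen-D4sym} reads $\phi\cdot Q = \pm Q$. Since $\mathcal{D}_{x,y}$ is quadratic in its argument the sign is irrelevant, and a chain-rule computation yields $\mathcal{D}_{x,y}(\phi\cdot Q) = \phi\cdot\mathcal{D}_{\phi(x),\phi(y)}(Q)$. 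Taking $\phi$ to be the vertical flip with $(x,y)=(u_j,u_{ij})$ then turns $h(u,u_i)=\mathcal{D}_{u_j,u_{ij}}(Q)$ into $h(u_j,u_{ij})=\mathcal{D}_{u,u_i}(Q)$, i.e. the two expressions for $h$ are one and the same polynomial evaluated at complementary pairs; the horizontal flip does the same for $\bar h$ and the rotation for $H$.

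The product relation \eqref{hhG-rel} is then immediate from the second fact: on $Q=0$ each biquadratic factorizes into the product of the two first-order derivatives of $Q$ over its complementary pair, so
\[
h(u,u_i)\,h(u_j,u_{ij}) = \bar h(u,u_j)\,\bar h(u_i,u_{ij}) = H(u,u_{ij})\,H(u_i,u_j) = \partial_u Q\,\partial_{u_i}Q\,\partial_{u_j}Q\,\partial_{u_{ij}}Q,
\]
the right-hand side being visibly insensitive to how the four derivatives are grouped into pairs.

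The remaining statement, the coincidence of the three vertex quartics in part~3, is the one point that needs a genuine (if short) computation, and is where I expect the only real friction. Writing $Q$ as an affine-linear polynomial in $u_i,u_j,u_{ij}$ with $u$-dependent coefficients and expanding $h(u,u_i)$, $\bar h(u,u_j)$, $H(u,u_{ij})$ as quadratics in $u_i$, $u_j$, $u_{ij}$ respectively, each discriminant $\mathcal{D}_{\bullet}$ takes the shape $(\text{linear coeff})^2 - 4\,(\text{constant coeff})(\text{quadratic coeff})$. Collecting the coefficient products of $Q$ into four blocks, the three linear coefficients come out as the three sign patterns of those blocks, and the difference of any two discriminants collapses: one finds that the difference of the squared linear coefficients equals exactly four times the difference of the corresponding constant–times–quadratic products, so the two discriminants agree, and the third follows identically. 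The main obstacle is thus purely the bookkeeping that forces these cross-terms to cancel; conceptually everything rests on affine-linearity together with \eqref{gen-D4sym}.
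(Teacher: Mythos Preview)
Your proposal is correct and is precisely what the paper has in mind: the paper offers no detailed argument, merely stating that the lemma ``can be easily established by exploiting the definitions and symmetry relations,'' and your write-up supplies exactly those details. In particular, your observation that $\mathcal{D}_{x,y}(Q)\big|_{Q=0} = -\,\partial_xQ\,\partial_yQ$ gives a clean one-line derivation of the product identity \eqref{hhG-rel}, and your use of the Kleinian permutations to transport $\mathcal{D}_{x,y}$ between complementary pairs is the intended mechanism for parts~1 and~2; you are also right that the equality of the three vertex discriminants in part~3 is the only place requiring an honest coefficient computation rather than pure symmetry.
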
 

With the above definitions we are now in a position to write an integral form of the three-leg formula for an affine-linear quadrilateral equations of the type 
\eqref{gen-eq} subject to the properties 1)-3) given above. The usual three-leg formula was given in Adler {\it et al.} (2003) on a case-by-case basis. 
In the monograph by Bobenko \& Suris (2008) a result by Adler is cited as an exercise, of which the following is a slight generalization: 
\begin{proposition}
Let $Q$ possess the properties 1)-3) given above. Then the equation
\begin{equation}
\int^{u_i}_{x}\!\!\! \frac{\rd t}{h(u,t)} + \int^{u_{j}}_{y} \!\!\!\frac{\rd t}{\bar{h}(u,t)} + \int^{u_{ij}}_{z(u,x,y)}\!\! \frac{\rd t}{H(u,t)}\,=\,0 \label{3leg-gen}
\end{equation}
holds on solutions of $Q(u,u_i,u_j,u_{ij})=0$, where $x$ and $y$ are arbitrary and where the function $z$ is defined through the relation $Q(u,x,y,z) = 0$.
\end{proposition}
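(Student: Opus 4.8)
The plan is to write $F(x,y)$ for the left-hand side of \eqref{3leg-gen}, viewed as a function of the two free parameters $x,y$ while $u,u_i,u_j,u_{ij}$ are held fixed on a solution of $Q=0$ and $z=z(u,x,y)$ is determined implicitly by $Q(u,x,y,z)=0$. I would then prove the statement in two steps: (i) $F$ is independent of $x$ and $y$, and (ii) $F$ vanishes for the particular choice $x=u_i$, $y=u_j$. Combining (i) and (ii) yields $F\equiv 0$ for all admissible $x,y$, which is precisely \eqref{3leg-gen}.

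For step (i) I would differentiate under the integral sign. Only the first and third legs contain $x$, and since $z$ is fixed by $Q(u,x,y,z)=0$ the implicit function theorem gives $\partial_x z=-\,\partial_{u_i}Q/\partial_{u_{ij}}Q$ (all derivatives evaluated at $(u,x,y,z)$). A one-line computation then shows that $\partial_x F=0$ is equivalent to the single relation
\[
H(u,z)\,\partial_{u_{ij}}Q \;=\; h(u,x)\,\partial_{u_i}Q \qquad \mbox{on } Q(u,x,y,z)=0,
\]
and, by the symmetry $x\leftrightarrow y$, $u_i\leftrightarrow u_j$, $h\leftrightarrow\bar{h}$, the vanishing of $\partial_y F$ reduces to the same relation with $h,\partial_{u_i}Q$ replaced by $\bar{h},\partial_{u_j}Q$.

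The crucial step, which carries the whole argument, is the observation that on the variety $Q=0$ every edge and diagonal bi-quadratic factorises into a product of two first derivatives of $Q$. Indeed, from ${\cal D}_{x',y'}(Q)=Q\,\partial_{x'}\partial_{y'}Q-\partial_{x'}Q\,\partial_{y'}Q$ the leading term drops when $Q=0$, so that on solutions
\[
h(u,u_i)=-\,\partial_{u_j}Q\,\partial_{u_{ij}}Q,\quad \bar{h}(u,u_j)=-\,\partial_{u_i}Q\,\partial_{u_{ij}}Q,\quad H(u,u_{ij})=-\,\partial_{u_i}Q\,\partial_{u_j}Q,
\]
which incidentally makes the product relation \eqref{hhG-rel} transparent. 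Applying these at $(u,x,y,z)$, both sides of the relation in step (i) collapse to $-\,\partial_{u_i}Q\,\partial_{u_j}Q\,\partial_{u_{ij}}Q$, so it holds identically; hence $F$ is constant in $x,y$. For step (ii), the choice $x=u_i$, $y=u_j$ annihilates the first two integrals, while affine-linearity of $Q$ in its last slot (with $\partial_{u_{ij}}Q\neq0$) forces the implicitly defined $z(u,u_i,u_j)$ to equal $u_{ij}$, so the third integral vanishes as well; thus $F=0$ there and, by step (i), everywhere.

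The main difficulty is analytic rather than algebraic: the integrands are singular at the zeros of $h,\bar{h},H$, so the argument is local and presupposes that the integration contours, together with the deformation of $(x,y)$ from $(u_i,u_j)$ to a general value, lie in a simply connected region free of these poles. I would therefore record at the outset the nondegeneracy afforded by Properties 1)--2) (namely $\partial_{u_{ij}}Q\neq0$ and $h,\bar{h},H\not\equiv0$), under which the relevant branch of $z$ and the three antiderivatives are well defined.
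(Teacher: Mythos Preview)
Your proof is correct and follows essentially the same route as the paper's: both arguments show that the left-hand side is independent of $x$ and $y$ by implicit differentiation, using the on-shell factorisations $h(u,x)=-\partial_{u_j}Q\,\partial_{u_{ij}}Q$ and $H(u,z)=-\partial_{u_i}Q\,\partial_{u_j}Q$ (which is exactly how the paper obtains $\partial_x z=-H(u,z)/h(u,x)$), and then evaluate at $x=u_i$, $y=u_j$. The only organisational difference is that the paper additionally substitutes $u_{ij}=z(u,u_i,u_j)$ and checks $\partial_{u_i}R=\partial_{u_j}R=0$ as well, but this extra step is not actually used in concluding the proposition; your version, which keeps $u,u_i,u_j,u_{ij}$ fixed on a solution and varies only $x,y$, is the cleaner packaging of the same idea.
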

\begin{proof} The proof is straightforward using implicit differentiation with respect to the variables $x$, $y$, $u_i$ and $u_j$.
Substituting $u_{ij}=z(u,u_i,u_j)$, since we require relation (\ref{3leg-gen}) to hold on solutions of $Q(u,u_i,u_j,u_{ij})=0$, and denoting the resulting left-hand side of 
(\ref{3leg-gen}) by $R(u;x,y;u_i,u_j)$, we can easily prove that $R$ is actually independent of $x$ and $y$, and also of $u_i$ and $u_j$. To show this it is sufficient 
to show that $\partial_x R = 0$ through implicit differentiation. Computing the derivative of $R$ with respect to $x$ we obtain: 
\begin{equation}
\frac{\partial R}{\partial x}\,=\,-\,\frac{1}{h(u,x)}\,-\,\frac{1}{H(u,z)}\,\frac{\partial z}{\partial x}\,, \label{R-der-x}
\end{equation}
since $z$ is determined by $u$, $x$ and $y$ through the equation $Q(u,x,y,z)=0$. Implicit differentiation of the latter yields
\begin{equation} 
\frac{\partial z}{\partial x}\,=\,-\,\frac{\partial_x Q(u,x,y,z)}{\partial_z Q(u,x,y,z)}\,=\,-\,\frac{H(u,z)}{h(u,x)}\,, \label{z-der-x}
\end{equation}
which holds on solutions of  $Q(u,x,y,z)=0$ by taking into account the definitions 
$$h(u,x)\,=\,{\cal{D}}_{y,z}(Q(u,x,y,z))\,,\quad H(u,z)\,=\,{\cal{D}}_{x,y}(Q(u,x,y,z))\,. $$
The combination of (\ref{R-der-x}) and (\ref{z-der-x}) leads to the result $\partial_x R = 0$. 
In precisely the same fashion it can be established that $\partial_y R=\partial_{u_i} R=\partial_{u_j} R=0$ , where in the latter two we make use of the 
substitution for $u_{ij}$ as mentioned above. Since as a consequence $R(u;x,y;u_i,u_j)$ does not depend on $x$,$y$, we can chose the latter arbitrarily, leading to the 
conclusion that ~$R(u;x,y;u_i,u_j)=R(u;u_i,u_j;u_i,u_j)$~, where the latter expression vanishes due to the integration limits and taking into account that 
for $x=u_i$, $y=u_j$ we have that $z=u_{ij}$. \end{proof}

Equation (\ref{3leg-gen}) suggests that the three leg form can be written as
\begin{equation}
F(u,u_i,u_j,u_{ij})\,-\,F^0(u)\,=\,0\,, \label{3leg-1}
\end{equation}
where
\begin{equation} \label{H-3leg}
F(u,u_i,u_j,u_{ij})\,:=\, \int^{u_i}\frac{\rd t}{h(u,t)} + \int^{u_j} \frac{\rd t}{\bar{h}(u,t)} + \int^{u_{ij}} \frac{\rd t}{H(u,t)}\,,
\end{equation}
and
\begin{equation} \label{H0-3leg}
F^0(u)\,:=\,F(u,x,y,z)\Big|_{Q(u,x,y,z)=0}\,, 
\end{equation}
where the indefinite integrals denote the primitive (anti-derivative) functions of the integrands with respect to the variables 
$u_i$, $u_j$ and $u_{ij}$, respectively, and the integration constant in each integral is assumed to be independent of the other integration variables. 

Lagrangians for the generalized three-leg formula \eqref{3leg-gen} can be obtained by simple integration over the remaining variable 
$u$, but there is a stronger result that holds. In fact, it turns out that under the assumption of the three-point form of the 
Lagrange function, as was done in Lobb \& Nijhoff (2009), one can actually derive in an almost unique way the form of the Lagrange function
for the affine-linear quadrilateral equations. This is expressed in
\begin{theorem} \label{Lagr-D4-prop}
Let $Q(u,u_i,u_j,u_{ij})$ be an irreducible, affine linear and Kleinian-symmetric polynomial. Then, up to a freedom of having a multiplicative constant, and an 
additive constant, a three-point Lagrangian of the form ${\cal{L}}(u,u_i,u_j)$ of the equation $Q=0$, is given by: 
\begin{eqnarray}
{\cal{L}}(u,u_i,u_j)&=& \int^u\!\!\!\int^{u_i}\!\frac{\rd s \,\rd t}{h(s,t)}\,+\,\int^u\!\!\!\int^{u_j}\!\frac{\rd s \,\rd t}{\bar{h}(s,t)}\,+
\,\int^{u_i}\!\!\!\int^{u_j}  \frac{\rd s \,\rd t}{H(s,t)} \nonumber  \\
& & -\,2\, \int^u\!\!\! F^0(s)\, \rd s . \label{Lagr-D4}
\end{eqnarray}
Integrals denote anti-derivatives of the integrands which are determined up to constants, i.e. independent of $u$, $u_i$ and $u_j$.
\end{theorem}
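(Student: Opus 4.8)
The plan is to verify directly that the stated $\mathcal{L}(u,u_i,u_j)$ produces $Q=0$ as its Euler--Lagrange equation, and then to argue that the three-point assumption pins the Lagrangian down up to the two advertised constants. Since $\mathcal{L}$ depends on the field through only three of the four corners of a plaquette, varying $u$ at a single site collects contributions from exactly the three plaquettes in which that site enters as the corner $u$, as $u_i$, or as $u_j$; the fourth plaquette, where the site would play the role of the absent corner $u_{ij}$, contributes nothing. First I would write the resulting discrete Euler--Lagrange equation as
\[
\partial_u\mathcal{L}(u,u_i,u_j)+\partial_{u_i}\mathcal{L}(u_{-i},u,u_{j,-i})+\partial_{u_j}\mathcal{L}(u_{-j},u_{i,-j},u)=0,
\]
so that everything reduces to evaluating the three partial derivatives of the candidate in \eqref{Lagr-D4}.

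Next I would compute these partials. Since each double integral is an anti-derivative of $1/h$, $1/\bar{h}$ or $1/H$, every first derivative collapses to a single integral: $\partial_u\mathcal{L}$ produces $\int^{u_i}\!\rd t/h(u,t)+\int^{u_j}\!\rd t/\bar{h}(u,t)-2F^0(u)$, while $\partial_{u_i}\mathcal{L}$ and $\partial_{u_j}\mathcal{L}$ produce the corresponding pairs of single-leg integrals built from $h,H$ and from $\bar{h},H$. After exploiting the symmetry of the bi-quadratics to reorder their arguments, the Euler--Lagrange expression becomes a sum of six single-leg integrals all anchored at the central value $u$, together with the term $-2F^0(u)$. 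The decisive step is to recognise, via the Proposition, that each neighbouring plaquette carries a three-leg relation \emph{centred at the common vertex $u$}: applying the Kleinian symmetry \eqref{gen-D4sym} to recentre \eqref{3leg-gen} on the appropriate corner of the plaquette at $\boldsymbol{n}-e_i$, and of the plaquette at $\boldsymbol{n}-e_j$, yields two identities each of value $F^0(u)$. The six single-leg integrals partition \emph{exactly} into these two recentred three-leg sums, so that the whole expression collapses to $F^0(u)+F^0(u)-2F^0(u)=0$. It is worth noting that only the two shifted plaquettes are needed; the relation centred at $u$ on the plaquette at $\boldsymbol{n}$ itself plays no role, which is consistent with the absence of the diagonal leg to $u_{ij}$ in $\mathcal{L}$.

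For the ``almost unique'' clause I would run the computation in reverse. Demanding that the three-term expression above vanish on every solution of $Q=0$ forces the three mixed second derivatives of a general three-point Lagrangian to agree, up to one common multiplicative constant, with the reciprocal bi-quadratics, $\partial_u\partial_{u_i}\mathcal{L}=1/h$, $\partial_u\partial_{u_j}\mathcal{L}=1/\bar{h}$ and $\partial_{u_i}\partial_{u_j}\mathcal{L}=1/H$; these are precisely the combinations that let the recentred three-leg relations produce the cancellation. Integrating the mixed derivatives reconstructs the three double integrals of \eqref{Lagr-D4} up to additive functions of single arguments, $f(u)+g(u_i)+k(u_j)$, whose mixed derivatives vanish, and the residual single-variable freedom is then fixed---again only up to an additive and a multiplicative constant---by requiring the vertex terms to cancel, which is exactly what selects $f=-2\int^u F^0$ and constant $g,k$.

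The delicate point, and the one I expect to be the main obstacle, is the Kleinian-symmetry bookkeeping. One must check that recentring \eqref{3leg-gen} on the $u_i$- and $u_j$-corners of the neighbouring plaquettes reproduces \emph{the same} functions $h$, $\bar{h}$, $H$ on edges and diagonals and, crucially, the same vertex function $F^0$, so that the two occurrences of $F^0(u)$ are genuinely identical with the $-2F^0(u)$ coming from $\mathcal{L}$; this relies on the consistency of the bi-quadratics under \eqref{gen-D4sym} guaranteed by the Lemma. Tracking the orientations and the signs $\epsilon,\sigma=\pm1$ through the recentring, and justifying that the mixed derivatives are genuinely \emph{forced} rather than merely sufficient in the uniqueness direction, are the places where care is required; the remaining manipulations are routine single-variable integrations.
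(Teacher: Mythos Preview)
Your verification direction is correct and clean: the six single-leg integrals do partition into the two recentred three-leg identities on the backward-shifted plaquettes, each equalling $F^0(u)$, and this is exactly the cancellation that makes the Euler--Lagrange equation hold. The paper arrives at the same final check, but only after first deriving the form of $\mathcal{L}$; so on this half you and the paper differ mainly in order of presentation.

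The gap is precisely where you yourself flag it: the uniqueness argument. Saying that $1/h$, $1/\bar h$, $1/H$ are ``precisely the combinations that let the recentred three-leg relations produce the cancellation'' observes that these choices are \emph{sufficient}, not that they are \emph{forced}. The paper closes this gap with a specific move you are missing. Starting from the general EL identity
\[
\partial_u\bigl(\mathcal{L}(u,u_i,u_j)+\mathcal{L}(u_{-i},u,u_{-i,j})+\mathcal{L}(u_{-j},u_{i,-j},u)\bigr)=0
\]
on solutions of $Q=0$, one differentiates with respect to $u_{-i}$. The point is that $u_{-i}$ occurs only in the middle term, and $u_{-i,j}$ is tied to it by $Q(u_{-i},u,u_{-i,j},u_j)=0$; implicit differentiation gives $\partial u_{-i,j}/\partial u_{-i}=-H(u,u_{-i,j})/h(u,u_{-i})$ (the same computation as in the proof of the three-leg Proposition). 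After one forward shift in $i$ this yields
\[
h(u,u_i)\,\partial_u\partial_{u_i}\mathcal{L}(u,u_i,u_j)=H(u_i,u_j)\,\partial_{u_i}\partial_{u_j}\mathcal{L}(u,u_i,u_j),
\]
and the analogous differentiation in $u_{-j}$ gives the companion relation with $\bar h$. These two linear relations among the three mixed second partials, with \emph{known} coefficients $h,\bar h,H$, force all three mixed partials to be a common constant multiple of $1/h$, $1/\bar h$, $1/H$ respectively. Integrating then leaves only additive single-variable functions $\Phi(u)+\Psi(u_i)+\mathrm{X}(u_j)$, and substituting back into the EL equation fixes $\Phi'(u)+\Psi'(u)+\mathrm{X}'(u)=-2cF^0(u)$; since $\Psi$ and $\mathrm{X}$ enter the action only through total differences, they may be dropped, which selects $\Phi(u)=-2c\int^u F^0$. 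This is the mechanism that turns your ``forced'' into an actual derivation; once you insert it, the rest of your outline goes through.
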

\begin{proof}
Let us suppose that we are looking for a function ${\cal{L}}(u,u_i,u_j)$ such that
\begin{equation}
\frac{\delta {\cal{L}}}{\delta u}\,\equiv\,\frac{\partial}{\partial u} \Big( {\cal{L}}(u,u_i,u_j) \,+\,{\cal{L}}(u_{-i},u,u_{-i,j})\,+\,{\cal{L}}(u_{-j},u_{i,-j},u) \Big)\,=\,0 \label{EL-eqs-1}
\end{equation}
on solutions of the equation $Q=0$.

First we differentiate the above equation with respect to $u_{-i}$ by taking into account that $u_{-i,j}$ is determined by the equation $Q(u_{-i},u,u_{-i,j},u_j)=0$. This yields
\begin{equation}
\partial_{u_{-i}} \partial_u {\cal{L}}(u_{-i},u,u_{-i,j})\,+\,\partial_{u_{-i,j}} \partial_u {\cal{L}}(u_{-i},u,u_{-i,j}) \frac{\partial u_{-i,j}}{\partial u_{-i}}\,=\,0\,, \label{EL-eqs-2} 
\end{equation}
where
$$\frac{\partial u_{-i,j}}{\partial u_{-i}}\,=\,-\,\frac{\partial_{u_{-i}} Q(u_{-i},u,u_{-i,j},u_j)}{\partial_{u_{-i,j}} Q(u_{-i},u,u_{-i,j},u_j)}\,
=\,-\,\frac{H(u,u_{-i,j})}{h(u,u_{-i})}$$
on solutions of equation $Q(u_{-i},u,u_{-i,j},u_j)=0$. Thus, substituting the latter into (\ref{EL-eqs-2}) and shifting forward in the $i$ direction the resulting relation, one arrives at
\begin{equation}
h(u,u_i)\,\partial_u \partial_{u_i} {\cal{L}}(u,u_i,u_j) \, = \, H(u_i,u_j)\,\partial_{u_i} \partial_{u_j} {\cal{L}}(u,u_i,u_j)\,. \label{EL-eqs-3a}
\end{equation}
In the same fashion, the derivative of (\ref{EL-eqs-1}) with respect to $u_{-j}$ leads to
\begin{equation}
h(u,u_j) \, \partial_u \partial_{u_j} {\cal{L}}(u,u_i,u_j)\,=\, H(u_i,u_j)\,\partial_{u_i} \partial_{u_j} {\cal{L}}(u,u_i,u_j)\,.\label{EL-eqs-3b}
\end{equation}
A third equation follows from the compatibility of (\ref{EL-eqs-3a},\ref{EL-eqs-3b}), namely
\begin{equation}
h(u,u_i)\,\partial_u \partial_{u_i} {\cal{L}}(u,u_i,u_j) \, = \, h(u,u_j) \, \partial_u \partial_{u_j} {\cal{L}}(u,u_i,u_j)\,.\label{EL-eqs-3c}
\end{equation}

The solution of the system of equations (\ref{EL-eqs-3a}--\ref{EL-eqs-3c}) is 
\begin{equation}
{\cal{L}}\,=\,c\,\left(\int^u\!\!\!\int^{u_i}\!\frac{\rd s \,\rd t}{h(s,t)}\,+\,\int^u\!\!\!\int^{u_j}\!\frac{\rd s \,\rd t}{\bar{h}(s,t)}\,+
\,\int^{u_i}\!\!\!\int^{u_j}  \frac{\rd s \,\rd t}{H(s,t)}\,\right)+\,\Phi(u)\,+\,\Psi(u_i)\,+\,{\rm{X}}(u_j)\,, \label{EL-eqs-4}
\end{equation}
where $c$ is a constant, $\Phi$, $\Psi$ and $\rm{X}$ are arbitrary functions of their arguments and double integrals denote anti-derivatives.

To determine the unspecified functions involved in ${\cal{L}}$, we write out explicitly the Euler-Lagrange equations (\ref{EL-eqs-1}) using (\ref{EL-eqs-4}) and taking into account (\ref{3leg-1}--\ref{H0-3leg}). This leads to $2\, c\, F^0(u)+ \Phi^\prime(u) + \Psi^\prime(u) + {\rm{X}}^\prime(u) =0$, where prime denotes differentiation with respect to $u$. From the latter equation, one of the unspecified functions can be defined in terms of the remaining ones. We choose to solve it for $\Phi(u)$ leading to 
$$\Phi(u)\,=\,-\,\Psi(u)\,-\,{\rm{X}}(u)\,-\,2\,c\,\int^u\!\!\!F^0(s) \rd s\,,$$
which is defined up to an additive constant.

Substituting back into (\ref{EL-eqs-4}), it is not difficult to see that the terms involving functions $\Psi$ and $\rm{X}$ can be written as a total difference,
and since their variational derivative does not contribute to the EL equations, they can be omitted from $\cal{L}$. Hence, up to a multiplicative constant and an additive constant, $\cal{L}$ is given by (\ref{Lagr-D4}).
\end{proof}

In all of the above formulas, primitives are determined up to constants. One can fix this freedom during the derivation of the Lagrangian by choosing an arbitrary function $u^0$ along with its shifts as lower limits in the integrals involved in (\ref{EL-eqs-4}). The derivation determines the remaining terms in the latter leading to
\begin{eqnarray}
{\cal{L}}(u,u_i,u_j)&=& \int^u_{u^0}\!\!\int^{u_i}_{u_i^{0}}\!\frac{\rd s \,\rd t}{h(s,t)}\,+\,\int^u_{u^0}\!\!\int^{u_j}_{u^0_j}\!\frac{\rd s \,\rd t}{\bar{h}(s,t)}\,+
\,\int^{u_i}_{u^0_i}\!\!\!\int^{u_j}_{u^0_j}  \frac{\rd s \,\rd t}{H(s,t)} \nonumber  \\
& & -\, \int^{u_i}_{u^0_i}\int^{w(s,u^0,u_{ij}^0)}_{u^0_j}\!\!\!\frac{\rd s \, \rd t}{H(s,t)}\,-\, \int^{u_j}_{u^0_j}\int^{z(t,u_{ij}^0,u^0)}_{u^0_i}\!\!\!\frac{\rd t \, \rd s}{H(s,t)}\,, \label{Lagr-D4-sp}
\end{eqnarray}
where $w$ and $z$ are solutions of the equations $Q(u^0,s,w,u_{ij}^0)=0$ and $Q(u^0,z,t,u_{ij}^0)=0$, respectively.

Lagrangian (\ref{Lagr-D4}) possesses the Klein symmetry in the sense that three different Lagrangians can be derived from this one by applying the interchanges implied by the Klein symmetry.

To make this statement clear, we apply the changes (\ref{gen-D4sym}) to (\ref{Lagr-D4}). The first interchange, i.e. $(u,u_i,u_j,u_{ij}) \rightarrow (u_i,u,u_{ij},u_j)$, which leaves invariant the equation $Q=0$, when applied to $\cal{L}$ leads to
${\cal{L}}(u_i,u,u_{ij})$, which is another Lagrangian of equation $Q=0$. Additionally, applying the changes $(u,u_i,u_j,u_{ij}) \rightarrow (u_j,u_{ij},u_i,u)$ to (\ref{Lagr-D4}) one arrives at another Lagrangian, namely ${\cal{L}}(u_j,u_{ij},u)$. Finally, combining these two symmetries one can derive a third Lagrangian which has the form ${\cal{L}}(u_{ij},u_j,u_i)$.

In Bobenko \& Suris (2010) the proof of the closure relation makes use of a fundamental property of the Lagrangian, which in the context of the present approach can be generalized and understood as the following
\begin{proposition}
All the above Langrangians are defined up to an additive constant and, by choosing it appropriately, their sum is zero on solutions of the equation $Q=0$, i.e.
\begin{equation}
{\cal{L}}(u,u_i,u_j)\,+\,{\cal{L}}(u_i,u,u_{ij})\,+\,{\cal{L}}(u_{ij},u_j,u_i)\,+\,{\cal{L}}(u_j,u_{ij},u)\,=\,0\,. \label{fund-Lagr-D4}
\end{equation}
\end{proposition}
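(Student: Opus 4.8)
The plan is to view the left-hand side of \eqref{fund-Lagr-D4}, call it $\Sigma$, as a function on the solution manifold of $Q=0$ --- on which $u_{ij}=z(u,u_i,u_j)$ is fixed by the equation --- and to prove that $\Sigma$ is independent of the three remaining free variables $u,u_i,u_j$. Once $\Sigma$ is shown to be constant there, the additive freedom noted in the statement is used to set that constant to zero.

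To organize the computation I would record the ingredients of \eqref{Lagr-D4} as potentials. Let $\partial_1,\partial_2$ denote differentiation with respect to the first and second argument, and introduce symmetric functions $G,\bar G,K$ with $\partial_1\partial_2 G=1/h$, $\partial_1\partial_2\bar G=1/\bar h$, $\partial_1\partial_2 K=1/H$ (a symmetric choice is possible because $h,\bar h,H$ are symmetric bi-quadratics), together with $\Omega$ satisfying $\Omega'=F^0$. Then \eqref{Lagr-D4} reads $\mathcal{L}(a,b,c)=G(a,b)+\bar G(a,c)+K(b,c)-2\Omega(a)$, while symmetry gives $\partial_2 G(a,b)=\partial_1 G(b,a)$ and likewise for $\bar G,K$. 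Substituting the four cyclically rotated argument triples of \eqref{fund-Lagr-D4} and collecting terms with these symmetries, $\Sigma$ equals twice
\[
G(u,u_i)+G(u_j,u_{ij})+\bar G(u,u_j)+\bar G(u_i,u_{ij})+K(u_i,u_j)+K(u,u_{ij})-\Omega(u)-\Omega(u_i)-\Omega(u_j)-\Omega(u_{ij}),
\]
an expression manifestly invariant under the symmetries of the square acting on the four vertices.

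The decisive input is the three-leg formula \eqref{3leg-gen}, read at each vertex. In the present notation its version centred at $u$ states that, on solutions,
\[
\partial_1 G(u,u_i)+\partial_1\bar G(u,u_j)+\partial_1 K(u,u_{ij})=\Omega'(u),
\]
the three legs being the $i$-edge, the $j$-edge and the diagonal issuing from $u$. Since $Q$ is Kleinian-symmetric, the Proposition applies verbatim after permuting the vertices, so the analogous identities hold centred at $u_i$, $u_j$ and $u_{ij}$, with the same primitive $F^0=\Omega'$ evaluated at the respective vertex (the assignment of $h$, $\bar h$ or $H$ to each leg being guaranteed by the Lemma). I would then differentiate $\Sigma$ with respect to $u_i$, using $u_{ij}=z$ and the chain rule, and rewrite every $\partial_2$ as a $\partial_1$ through the symmetry of the potentials. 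The terms free of $\partial_{u_i}z$ then assemble precisely into the three-leg identity centred at $u_i$, while the terms proportional to $\partial_{u_i}z$ assemble into the one centred at $u_{ij}$; both vanish, so $\partial_{u_i}\Sigma=0$. The same mechanism --- pairing the identity at the differentiated vertex with the one at $u_{ij}$ --- yields $\partial_u\Sigma=\partial_{u_j}\Sigma=0$. Hence $\Sigma$ is constant on the solution manifold, and fixing the additive constant accordingly gives \eqref{fund-Lagr-D4}.

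The differentiation itself is routine; the real work lies in the bookkeeping that makes the cancellation exact. First, the anti-derivative constants in $G,\bar G,K,\Omega$ must be chosen consistently so that each $\partial_1 G(u,u_i)$ coincides literally with the three-leg integrand $\int^{u_i}\!\rd t/h(u,t)$ and so that this is compatible with a symmetric choice of potential. Second, one must check, using the Lemma and the Kleinian symmetry \eqref{gen-D4sym}, that recentring \eqref{3leg-gen} at $u_i$, $u_j$ and $u_{ij}$ attaches the correct bi-quadratic to each leg and shares the common primitive $F^0$. Once these identifications are secured, the clean splitting of each first derivative of $\Sigma$ into exactly two vertex-centred three-leg expressions is what forces the result.
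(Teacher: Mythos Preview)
Your proposal is correct and follows essentially the same approach as the paper: you restrict the sum $\Sigma$ to the solution manifold $u_{ij}=z(u,u_i,u_j)$ and show that its partial derivatives with respect to the free variables $u,u_i,u_j$ vanish, concluding that $\Sigma$ is constant and then absorbing that constant. The paper's own proof is only a two-sentence sketch to exactly this effect, whereas you have spelled out the mechanism --- introducing symmetric potentials $G,\bar G,K,\Omega$, recognising that each $\partial_{u_i}\Sigma$ splits into the three-leg identity centred at $u_i$ plus $\partial_{u_i}z$ times the one centred at $u_{ij}$ --- which is precisely the computation the paper leaves to the reader.
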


\begin{proof}
To prove that the sum is a constant on solutions of $Q=0$, it is sufficient to prove that its derivatives with respect to $u$, $u_i$ and $u_j$ vanish. Indeed, differentiating it with respect to $u$ or the single shifted values of $u$ and taking into account that $u_{ij}$ depends on $u$, $u_i$ and $u_j$ through the equation $Q=0$, one easily verifies that the sum of the four Lagrangians is constant on solutions of $Q=0$.
\end{proof}

\section{Universal Lagrangian formulation of the ABS equations and Closure} \label{sec-ABS}

Among the members of the class of equations studied in the previous section are the ones possessing the symmetries of the square, or D4 symmetry. These equations may depend on two lattice parameters, denoted by $\alpha_i$ and $\alpha_j$ respectively, and their defining polynomial will be denoted by $Q_{ij}(u,u_i,u_j,u_{ij})$, or simply by $Q_{ij}$, in order to make this dependence explicit.

Since the lattice parameters are assigned to the edges of a plaquette, the symmetries of the square imply that the interchange of $u_i$ and $u_j$ must be followed by a mutual interchange of the lattice parameters, i.e. 
\begin{equation} \label{gen-eqABS} 
Q_{ij}(u,u_i,u_j,u_{ij})\,=\,\epsilon\,Q_{ji}(u,u_j,u_i,u_{ij})\,.
\end{equation}

The ABS equations, which are given in \ref{fGk}, belong to this subclass and depend explicitly on the lattice parameters. They are multidimensionally consistent, cf. Definition \ref{MDC-dis}, and this property depends essentially on the appearance of the lattice parameters in them. These two characteristics of the ABS equations have the following consequences. First of all, the polynomials $h$ assigned to the edges can be factorized as
\begin{eqnarray}
h(u,u_i) \,=\,  \kappa_{ij} h_{i}(u,u_i)\,, \quad h(u_j,u_{ij}) \, =\, \kappa_{ij} h_i(u_j,u_{ij}),\nonumber \\
 h(u,u_j) \, = \, \kappa_{ji}h_j(u,u_j)\,, \quad h(u_i,u_{ij}) \,= \,\kappa_{ji} h_j(u_i,u_{ij}). \label{h-ABS}
\end{eqnarray}
The function $\kappa_{ij} = \kappa(\alpha_i,\alpha_j)$ is antisymmetric, i.e. $\kappa_{ji} = - \kappa_{ij}$, and $h_i(x,y)$ is a symmetric bi-quadratic 
polynomial of $x$ and $y$ and the index $i$ denotes that it depends only on the lattice parameter $\alpha_i$.

Moreover, polynomial $H(x,y)$ is symmetric and bi-quadratic with respect to $x$ and $y$, and symmetric with respect to the lattice parameters. It is more convenient for the analysis of the ABS equations in the following sections to introduce the function
\begin{equation} 
H_{ij}(x,y) \,:=\, \frac{H(x,y)}{\kappa_{ij}}\,, \label{omega-ABS}
\end{equation}
which is antisymmetric with respect to the interchange of indices : $H_{ij}=-H_{ji}$. Again, the indices denote the dependence on the lattice parameters and not shifts in the corresponding lattice directions. Functions $h_i$, $H$ and $\kappa_{ij}$ can be found also in \ref{fGk}.

Finally, the factorization of polynomials $h$ implies that the three leg form for these equations can be written as
\begin{equation}
\int^{u_i}_{x}\!\!\! \frac{\rd t}{h_i(u,t)} - \int^{u_{j}}_{y}\!\!\! \frac{\rd t}{h_j(u,t)} + \int^{u_{ij}}_{z(u,x,y)}\!\! \frac{\rd t}{H_{ij}(u,t)}\,=\,0\,, \label{3leg}
\end{equation}
where $x$, $y$ are arbitrary and $z$ is any solution of $Q_{ij}(u,x,y,z) = 0$.

In this case, the function $F$ (\ref{H-3leg}) is antisymmetric with respect to the interchanges of the indices and is given by
\begin{equation} \label{ABS-H-3leg}
F_{ij}\,\equiv\,F(u,u_i,u_j,u_{ij};\alpha_i,\alpha_j)\,:=\,\int^{u_i}\!\!\! \frac{\rd t}{h_i(u,t)}\,-\,\int^{u_j} \!\!\!\frac{\rd t}{h_j(u,t)} \,+
\,\int^{u_{ij}}\!\!\! \frac{\rd t}{H_{ij}(u,t)}\,.
\end{equation}
Similarly, the function $F^0$ (\ref{H0-3leg}) takes the following form
\begin{equation} \label{ABS-H0-3leg}
F_{ij}^0(u) \,:=\, F(u,x,y,z;\alpha_i,\alpha_j) \Big|_{Q_{ij}(u,x,y,z)=0}\,. \end{equation}

Having introduced through Proposition \ref{Lagr-D4-prop} a Lagrangian for the ABS equations, we now proceed and prove that it obeys the closure property. 
\begin{theorem} \label{Lang-clos-rel-ABS}
The following function
\begin{eqnarray}
{\cal{L}}_{ij}(u,u_i,u_j)&=& \int^u\!\!\!\int^{u_i}\!\frac{\rd s \,\rd t}{h_i(s,t)}-\int^u\!\!\!\int^{u_j}\!\frac{\rd s \,\rd t}{h_j(s,t)}+
\int^{u_i}\!\!\!\int^{u_j}  \frac{\rd s \,\rd t}{H_{ij}(s,t)}\nonumber\\
&& -\,\int^{u_i}\!\!\!F^0_{ij}(t) \rd t\,-\,\int^{u_j}\!\!\!F^0_{ij}(t) \rd t \label{disLag1}
\end{eqnarray}
defines a Lagrangian for the ABS equation $Q_{ij}=0$, meaning that the EL equations are satisfied whenever $Q_{ij}(u,u_i,u_j,u_{ij})=0$.
Moreover, the Lagrangians ${\cal{L}}_{ij}$, ${\cal{L}}_{j \ell}$ and ${\cal{L}}_{\ell i}$ satisfy the closure relation
\begin{equation}
\Delta_i {\cal{L}}_{j \ell}\,+\,\Delta_j {\cal{L}}_{\ell i} \,+\, \Delta_\ell {\cal{L}}_{ij}\,=\,0\,,\label{dis-clos}
\end{equation}
which holds on solutions of the ABS equation $Q_{ij}=0$, $Q_{j\ell}=0$ and $Q_{\ell i}=0$, and hence, in view of skew symmetry of the 
Lagrangian functions ${\cal L}_{ij}(u,u_i,u_j)=-{\cal L}_{ji}(u,u_j,u_i)$, it defines a closed discrete 2-form.
\end{theorem}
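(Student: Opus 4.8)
The plan is to dispatch the Lagrangian property as a corollary of Theorem~\ref{Lagr-D4-prop} and then to establish the closure relation by a direct constancy argument on the elementary cube. For the first assertion, I would observe that (\ref{disLag1}) is simply the ABS specialization of the universal Lagrangian (\ref{Lagr-D4}): inserting the factorizations (\ref{h-ABS}) and (\ref{omega-ABS}) extracts an overall factor $1/\kappa_{ij}$, constant in the fields $u,u_i,u_j$, and turns the three double integrals of (\ref{Lagr-D4}) into those of (\ref{disLag1}). It then remains only to reconcile the $F^0$-terms; writing $G(u):=\int^u\! F^0_{ij}(s)\,\rd s$, the term $-2G(u)$ in (\ref{Lagr-D4}) differs from the pair $-G(u_i)-G(u_j)$ in (\ref{disLag1}) by $2G(u)-G(u_i)-G(u_j)=-\Delta_iG-\Delta_jG$, i.e. by a total difference. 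Since the discrete Euler--Lagrange operator annihilates total differences and overall multiplicative constants, (\ref{disLag1}) has the same variational equation as (\ref{Lagr-D4}), which by Theorem~\ref{Lagr-D4-prop} is $Q_{ij}=0$.

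For the closure relation I would work on the elementary cube in the directions $i,j,\ell$, taking $u,u_i,u_j,u_\ell$ as independent data and regarding $u_{ij},u_{i\ell},u_{j\ell}$ as the functions of these determined by $Q_{ij}=0$, $Q_{i\ell}=0$ and $Q_{j\ell}=0$; multidimensional consistency guarantees that the three assignments are mutually compatible, and one checks that the triple shift $u_{ij\ell}$ never enters the relation. Writing $C$ for the left-hand side of (\ref{dis-clos}) and expanding $\Delta_i,\Delta_j,\Delta_\ell$, one obtains an alternating sum of six face-Lagrangians in the remaining seven vertices. The strategy is to prove that $C$ is \emph{constant} on solutions, by showing $\partial_uC=\partial_{u_i}C=\partial_{u_j}C=\partial_{u_\ell}C=0$, and then to fix the constant to zero. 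Note that $C$ is automatically independent of the arbitrary additive constants in the $\mathcal{L}$'s, since each such constant enters a $\Delta$-pair and cancels.

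Consider $\partial_{u_\ell}C$; the other three derivatives then follow by the cyclic symmetry $i\to j\to\ell\to i$ and the skew-symmetry $\mathcal{L}_{ij}=-\mathcal{L}_{ji}$. The variable $u_\ell$ enters three face-terms explicitly and two more implicitly through $u_{i\ell}$ and $u_{j\ell}$; the implicit derivatives are computed exactly as in (\ref{z-der-x}) and, significantly, share the common denominator $h_\ell(u,u_\ell)$, namely $\partial u_{i\ell}/\partial u_\ell=H_{i\ell}(u,u_{i\ell})/h_\ell(u,u_\ell)$ and $\partial u_{j\ell}/\partial u_\ell=H_{j\ell}(u,u_{j\ell})/h_\ell(u,u_\ell)$. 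The explicit single-argument derivatives of the $\mathcal{L}$'s read off from (\ref{disLag1}) are built solely from the leg integrands $1/h_i$, $1/h_j$, $1/H_{ij}$ together with $F^0$. Substituting the implicit derivatives and using the product relation (\ref{hhG-rel}), the antisymmetries $H_{ij}=-H_{ji}$ and $\kappa_{ji}=-\kappa_{ij}$, and the three-leg form (\ref{3leg}), I expect the leg integrals attached to the edges shared by neighbouring faces of the cube to cancel pairwise and the $F^0$-contributions to collapse through the three-leg identity, leaving $\partial_{u_\ell}C=0$.

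The main obstacle is precisely this cancellation: one must track the seven vertex values and the two implicit dependencies simultaneously and arrange the integrals of $1/h_i$, $1/h_j$, $1/H_{ij}$ so that (\ref{hhG-rel}) and (\ref{3leg}) force an exact pairwise collapse rather than a mere simplification. This is the step in which multidimensional consistency is genuinely used, since it is the compatible assignment of $u_{ij},u_{i\ell},u_{j\ell}$ that makes the contributions on the shared edges match. Once constancy is secured, fixing the constant to zero is the lighter task: one may either evaluate $C$ in a suitably degenerate configuration or, more structurally, relate the six face-terms to the vanishing four-term plaquette sums of the fundamental identity (\ref{fund-Lagr-D4}). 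The skew-symmetry $\mathcal{L}_{ij}(u,u_i,u_j)=-\mathcal{L}_{ji}(u,u_j,u_i)$ then upgrades (\ref{dis-clos}) to the statement that $\mathcal{L}$ is a closed discrete $2$-form.
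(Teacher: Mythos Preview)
Your reduction of the first assertion to Theorem~\ref{Lagr-D4-prop} via the factorizations \eqref{h-ABS}, \eqref{omega-ABS} and a total-difference adjustment is correct and matches the paper's argument in spirit; the paper phrases it as a specialization of the more general form \eqref{EL-eqs-4} with the multiplicative constant taken to be $\kappa_{ij}$, but the content is the same.

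For the closure relation your constancy strategy is sound, but the execution you outline is more laborious than what the paper does, and differs from it in one structural point. Rather than fixing $u,u_i,u_j,u_\ell$ as independent data and pushing the chain rule through the constraints on $u_{ij},u_{i\ell},u_{j\ell}$, the paper treats all \emph{seven} vertex values as independent and shows that each of the partial derivatives $\partial_u\mathcal{C},\partial_{u_i}\mathcal{C},\partial_{u_{ij}}\mathcal{C},\ldots$ vanishes separately on solutions; each such derivative reduces directly to one or two copies of the three-leg relation \eqref{3leg-1} centred at the corresponding vertex, so no implicit differentiation and none of the identities \eqref{hhG-rel} are needed at this stage. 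Your chain-rule computation would in effect recombine three such vanishing pieces, which is why you anticipate delicate cancellations --- they are avoidable. The paper also carries out an additional step you omit, namely showing $\partial_{\alpha_i}\mathcal{C}=0$ on solutions via Lemma~\ref{f-omega-relations}; this is where the identities \eqref{h-a-der}--\eqref{omega-a-der} and the product relation $H_{\ell i}H_{\ell i}=H_{ij}H_{ij}$ on the cube enter. Strictly speaking this step is redundant once one evaluates on the special solution, so your omission is not a gap, but you should be aware that the paper invests effort there. Finally, the paper pins the constant to zero concretely by the special solution $u_{ij}=u_\ell$, $u_{j\ell}=u_i$, $u_{i\ell}=u_j$ (whose existence is verified via the $h$-product form of the quad equations), rather than by appealing to \eqref{fund-Lagr-D4}; your second suggestion of using the four-term plaquette identity does not transfer in any obvious way to the six-face cube sum, so you should commit to the special-solution evaluation.
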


For the proof of the closure relation, we will need the following
\begin{lemma} \label{f-omega-relations}
Let $h_i(u,x)$ be any of the polynomials related to the ABS equation $Q_{ij}=0$. Then, the relation
\begin{equation}
2\,\partial_{\alpha_i} h_i(u,x) \,+\,{\cal{D}}_{u,x}(h_i(u,x))\,=\,0 \label{h-a-der}
\end{equation}
holds identically.

Similarly, if $H_{ij}(u,x)$ is any of the diagonal polynomials corresponding to the ABS equation $Q_{ij}=0$, then the following identities hold
\begin{equation}
2\, \partial_{\alpha_i} H_{ij}(u,x)- {\cal{D}}_{u,x}(H_{ij}(u,x)) = 0,\quad 2\, \partial_{\alpha_j} H_{ij}(u,x) + 
{\cal{D}}_{u,x}(H_{ij}(u,x))= 0. \label{omega-a-der}
\end{equation}
\end{lemma}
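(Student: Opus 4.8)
The plan is to read both families of identities as statements about how the edge and diagonal biquadratics deform as one moves the lattice parameter, and to show that this deformation is forced to point along the Wronskian direction ${\cal D}_{u,x}$. First I would use the basic structural input from the ABS theory that, in the normalization for which $h_i$ depends on the single parameter $\alpha_i$ (resp.\ $H_{ij}$ is antisymmetric in the indices), the one-variable discriminant is the characteristic quartic of the equation and is \emph{independent} of the lattice parameters, i.e.\ ${\cal D}_x(h_i(u,x)) = r(u)$ with $\partial_{\alpha_i} r = 0$, and likewise for $H_{ij}$. Differentiating this relation with respect to $\alpha_i$ and recalling that ${\cal D}_x$ is a quadratic functional of its argument, one obtains $L_{h_i}(\partial_{\alpha_i} h_i)=0$, where $L_h(g) := 2\,\partial_x h\,\partial_x g - 2\,g\,\partial_x^2 h - 2\,h\,\partial_x^2 g$ is the linearization of ${\cal D}_x$ at $h$. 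Thus $\partial_{\alpha_i} h_i$ lies in the kernel of $L_{h_i}$.

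The engine of the proof is the purely algebraic identity, valid for \emph{any} symmetric biquadratic $h$,
\[
L_h\big({\cal D}_{u,x}(h)\big) = 0\,,
\]
which I would establish by a short expansion using only $\partial_x^3 h = 0$ (the terms cancel in pairs once one notes $\partial_x\,{\cal D}_{u,x}(h) = h\,\partial_u\partial_x^2 h - \partial_u h\,\partial_x^2 h$). Hence ${\cal D}_{u,x}(h_i)$ lies in the same kernel as $\partial_{\alpha_i} h_i$. In the nondegenerate case, where $r$ is a genuine quartic, $\ker L_{h_i}$ is one-dimensional within the space of symmetric biquadratics, so necessarily $\partial_{\alpha_i} h_i = c(\alpha_i)\,{\cal D}_{u,x}(h_i)$; comparing the single coefficient of $u^2x^2$ on both sides, and using the explicit leading behaviour recorded in \ref{fGk}, fixes $c=-\tfrac12$, which is \eqref{h-a-der}. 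The diagonal relations follow by the same three steps, yielding $\partial_{\alpha_i} H_{ij} = +\tfrac12\,{\cal D}_{u,x}(H_{ij})$; the companion relation for $\partial_{\alpha_j}$ then comes for free, since the antisymmetry $H_{ij} = -H_{ji}$ together with the fact that ${\cal D}_{u,x}$ is quadratic (hence insensitive to an overall sign) flips precisely the sign of the Wronskian term, reproducing the second equation in \eqref{omega-a-der}.

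The hard part will be the two places where the specific ABS data enter, rather than general considerations. The first is justifying the $\alpha$-independence of the discriminant quartic in the normalization adapted to $h_i$ and $H_{ij}$: this is exactly the condition that pins the proportionality constant to $\mp\tfrac12$ --- indeed a rescaling $h_i \mapsto \lambda(\alpha_i)\,h_i$ destroys the identity unless $\lambda$ is constant --- so one cannot avoid invoking the concrete parametrization (equivalently, checking that the top coefficient $A(\alpha_i)$ obeys $A' = -A^2$, with the subleading coefficients following suit). The second is the degenerate members of the list, where $r$ acquires repeated roots and $\ker L_{h_i}$ jumps in dimension, so the uniqueness step fails; these I would dispose of either by a limiting argument from the generic case or, more safely, by direct substitution. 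Since $h_i$, $H_{ij}$ and $\kappa_{ij}$ are explicit low-degree polynomials listed in \ref{fGk}, the whole Lemma can in any case be verified directly, case by case; the structural argument above explains why it must hold and reduces the verification to a single coefficient.
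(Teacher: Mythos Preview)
Your approach is sound but takes a genuinely different route from the paper's. The paper derives \eqref{h-a-der} via the \emph{symmetry algebra} of the ABS equations: it computes the commutator of the master symmetry $V_i$ (given in \eqref{master-sym}) with the basic three-point generalized symmetry $K_i^{(1)}$, identifies the known higher symmetry $K_i^{(2)}$ inside the result, and observes that the remainder $P\!\left(\partial_{\alpha_i} h_i + \tfrac12\,{\cal D}_{u,x}(h_i)\right)$ must itself be a three-point symmetry; the classification of such symmetries then forces $\partial_{\alpha_i} h_i + \tfrac12\,{\cal D}_{u,x}(h_i)=\mu(\alpha_i)h_i$, and a case check gives $\mu\equiv 0$. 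For $H_{ij}$ the paper does not rerun this argument but notes that for Q1--Q4 one has $H_{ij}(u,x)=-h(u,x;\alpha_i-\alpha_j)$, so \eqref{omega-a-der} follows from \eqref{h-a-der} by the chain rule, with the H-list done directly. Your route---linearizing the $\alpha$-independence of the discriminant quartic $r(u)$ and recognising ${\cal D}_{u,x}(h)$ as a tangent vector to the one-parameter isodiscriminant family of symmetric biquadratics---is more elementary and avoids the symmetry machinery entirely; your antisymmetry trick for the second half of \eqref{omega-a-der} is also cleaner than the paper's Q/H split. One small slip: the top-coefficient relation is not $A'=-A^2$ in general (expanding ${\cal D}_{u,x}(h)$ for $h=a\,u^2x^2+b(u^2x+ux^2)+\cdots+d\,ux+\cdots$ gives leading coefficient $ad-b^2$, so the correct constraint is $2a'+ad-b^2=0$), though your strategy of pinning the proportionality constant by a single coefficient is fine. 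Both arguments end with a residual case-by-case check: yours for the degenerate members where $\ker L_h$ jumps, the paper's for the vanishing of $\mu$.
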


The identity for $h_i$ follows from the symmetry analysis of the ABS equations, (Xenitidis 2009, Mikhailov {\it{et al.}} 2010), and its derivation is given in  \ref{aDerivative}. Actually, this relation expresses the fact that the lattice parameters play the role of 
the continuous ``master symmetry'' parameters. The same relation can be found in the context of integrable chains of differential--difference equations and their master symmetries (Yamilov 2006).

The relation for $H_{ij}$ can be proven either directly, or, in the case of Q1--Q4, using the observation that
$ H_{ij}(u,x)\,\equiv\,-h(u,x,\alpha_i-\alpha_j)$.

We now proceed with the proof of Theorem \ref{Lang-clos-rel-ABS}. 

\begin{proof}
Since the ABS equations belong to a subclass of the family of equations studied in the previous section, the Lagrangian (\ref{disLag1}) easily follows from (\ref{EL-eqs-4}) by assuming it to be symmetric with respect to the interchange of indices, choosing the multiplicative constant to be $\kappa_{ij}$ and using relations (\ref{h-ABS}, \ref{omega-ABS}).

Let us now denote the left hand side of (\ref{dis-clos}) by $\cal{C}$. To prove that the closure relation holds on solutions of equations $Q_{ij}=0$, $Q_{j\ell}=0$ and $Q_{\ell i}=0$, it is sufficient to prove that $\cal{C}$ is independent of $u$ and its single shifts, as well as the lattice parameters. Observing that the closure relation 
is symmetric with respect to any permutation of the indices, it is sufficient to show that $\partial_u {\cal{C}} =  \partial_{u_i} {\cal{C}} = \partial_{\alpha_i}{\cal{C}}=0$.

Actually, it can be proven that {\sl{all}} the first order derivatives of $\cal{C}$ vanish on solutions of $Q_{ij}=0$, $Q_{j\ell}=0$ and $Q_{\ell i}=0$. First of all, writing $\cal{C}$ out explicitly, it follows that it is independent of $u$. Now, on the one hand, the derivative of $\cal{C}$ with respect to $u_i$ reads as follows 
\begin{equation} \partial_{u_i}{\cal{C}}\,=\,  -T_{ij}(u_i)-T_{\ell i}(u_i)\,,\label{deriv-clos-ui} \end{equation}
where 
$$T_{ij}(u_i)\,:=\, F_{ij}(u,u_i,u_j,u_{ij})\,-\,F^0_{ij}(u_i).$$
Obviously, in view of $Q_{ij}=0$ and $Q_{\ell i}=0$, relation (\ref{deriv-clos-ui}) implies that $\partial_{u_i} {\cal{C}}=0$.

On the other hand, the derivative of $\cal{C}$ with respect to $u_{ij}$ has the form
\begin{equation}
\partial_{u_{ij}} {\cal{C}} \,=\, {\cal{S}}_i (T_{j\ell}(u_j)) + {\cal{S}}_j(T_{\ell i}(u_i))\,=\,0\,, \label{deriv-clos-uij}
\end{equation}
where the last equality holds on $Q_{j\ell}=0$ and $Q_{\ell i}=0$ and their shifts. Thus, $\cal{C}$, due to its symmetry, is independent of $u$ and all of its shifts.

The next step is to prove that the closure relation is independent of the lattice parameters. Considering the derivative of $\cal{C}$ with respect to $\alpha_i$ and using Lemma (\ref{f-omega-relations}), one arrives at
$$
\partial_{\alpha_i} {\cal{C}} \,=\,\frac{1}{2}\,\ln \left(\frac{H_{\ell i}(u,u_{i,\ell}) H_{\ell i}(u_{j \ell},u_{ij})}{H_{i j}(u,u_{ij}) H_{i j}(u_{i,\ell},u_{j \ell})} \right) - \frac{\partial I}{\partial \alpha_i}\,,
$$
where
$$ I := \Delta_\ell \left(\int^{u_i}\!\!\!F_{ij}^0(s) \rd s + \int^{u_j}\!\!\!F_{ij}^0(s) \rd s \right)+ \Delta_j \left(\int^{u_\ell}\!\!\!F_{\ell i}^0(s) \rd s + \int^{u_i}\!\!\!F_{\ell i}^0(s) \rd s \right)\,.
$$

It is not difficult to verify case by case that
$$ H_{\ell i}(u,u_{i \ell}) H_{\ell i}(u_{j \ell},u_{ij})\,-\, H_{i j}(u,u_{ij}) H_{i j}(u_{i \ell},u_{j \ell})\Big|_{Q_{ij}=0,Q_{j\ell}=0,Q_{\ell i}=0}=0\,,$$
which is equivalent to the affine linear equation $P(u,u_{ij},u_{j \ell},u_{i \ell})=0 $ relating the four points $u$, $u_{ij}$, $u_{j \ell}$ and $u_{i \ell}$.

Finally, the definition (\ref{ABS-H0-3leg}) of $F^0_{ij}$ and relations\footnote{These relations hold on solutions of the equation $Q_{ij}=0$ and can be verified case by case since their proof is lengthy and is omitted here.} 
\begin{equation}
\frac{\partial u_{ij}}{\partial \alpha_i} = \frac{-H_{ij}(u,u_{ij})}{2}\, \partial_u \ln\left( \frac{h_i(u,u_i)}{H_{ij}(u,u_{ij})}\right),\, \frac{\partial u_{ij}}{\partial \alpha_i}=\frac{-H_{ij}(u_i,u_{j})}{2}\, \partial_{u_j}\left( \frac{h_i(u_j,u_{ij})}{H_{ij}(u_i,u_{j})}\right), \label{uij-a-der}
\end{equation}
imply that $\partial_{\alpha_i} I = 0$. Combining all these it follows that $\partial_{\alpha_i} {\cal{C}}=0$.

Hence, the closure relation is constant on solutions of the corresponding ABS equation. We will show now that this constant is zero. The left hand side of the closure relation, when is evaluated on solutions of the corresponding ABS equation, becomes
\begin{eqnarray}
{\cal{C}}\,=\, -\int^{u_{j \ell}}\!\!\!\int^{u_{i \ell}}\!\!\!\frac{\rd s \, \rd t}{H_{ij}(s,t)} + \int^{u_i}\!\!\!\int^{u_j}\!\!\!\frac{\rd s \,\rd t}{H_{ij}(s,t)}+ \int^{u_j}\!\!\!\int^{u_{ij}}\!\!\!\frac{\rd s \,\rd t}{h_{i}(s,t)} - \int^{u_\ell}\!\!\!\int^{u_{i \ell}}\!\!\!\frac{\rd s \, \rd t}{h_{i}(s,t)} \nonumber\\
 \nonumber \\
 +\,\,\, {\mbox{terms obtained by cycling permuting indices}}\,. \hspace{4.8cm} \label{C-on-solutions}
\end{eqnarray}
This formula suggests the existence of the solution\footnote{This solution was considered also in Bobenko \& Suris (2010).}
\begin{equation}
u_{ij}\,=\,u_\ell\,,\quad u_{j \ell}\,=\,u_i\,,\quad u_{i \ell}\,=\,u_j\,, \label{special-solution}
\end{equation}
which, when combined with relation (\ref{C-on-solutions}), leads to ${\cal{C}}=0$, implying that $\cal{C}$ vanishes on any solution of the corresponding ABS equation.

The particular solution (\ref{special-solution}) is equivalent to the system
$$Q_{ij}(u,u_i,u_j,u_\ell)\,=\,0\,,\quad Q_{j\ell}(u,u_j,u_\ell,u_i)\,=\,0\,,\quad Q_{\ell i}(u,u_\ell,u_i,u_j)\,=\,0\,.$$
To prove that such a solution exists, it is sufficient to prove that two of the equations imply the third one. Indeed, writing the two first equations in terms of the polynomials $h$, i.e.
$$h_i(u,u_i) h_i(u_j,u_\ell)\,=\,h_j(u,u_j) h_j(u_i,u_\ell),\,\, h_j(u,u_j) h_j(u_\ell,u_i)\,=\,h_\ell(u,u_\ell) h_\ell(u_j,u_i),$$
the elimination of polynomials $h_j$ between them yields \footnote{Here we have used the fact that polynomials $h_i$ are symmetric, i.e. $h_i(x,y)=h_i(y,x)$.}
$$h_i(u,u_i)\,h_i(u_j,u_\ell)\,=\, h_\ell(u,u_\ell)\, h_\ell(u_j,u_i)\,,$$
which is equivalent to the third equation of the above system.
\end{proof}

\noindent {\bf{Remark 3.1.}}
There exist deformations of the H equations in the ABS list, which were first presented in Adler {\it{et al.}} (2009), and further studied in Xenitidis \& Papageorgiou (2009). The defining functions of these equations possess the basic properties of affine linearity and irreducibility, and the symmetries of the rhombus. They are defined on a black-white lattice which implies that they can be written in non-autonomous form. In this formulation, the polynomials assigned to the edges have the same form and can be factorized like the polynomials (\ref{h-ABS}) corresponding to the ABS equations. On the other hand, there are two diagonal polynomials $H_1$ and $H_2$ related to each other by shifts, e.g. ${\cal{S}}_i H_1(u_i,u_{j}) = H_2(u_{ii},u_{ij})$, cf. Xenitidis \& Papageorgiou (2009). Actually, from Theorem \ref{Lang-clos-rel-ABS} follows a Lagrangian for the deformed H equations which satisfies the closure relation as well. One has to use the polynomials $h$ corresponding to the deformed equations and replace $H_{ij}(u,u_{ij})$ and $H_{ij}(u_i,u_j)$ by ${H_1}_{ij}(u_i,u_j)$ and ${H_2}_{ij}(u,u_{ij})$, respectively, in (\ref{disLag1}). \hfill $\Box$ \\

\noindent {\bf{Remark 3.2.}}
The lift of the ABS equation $Q_{ij}=0$ to a two-field discrete system was suggested by Papageorgiou \& Tongas (2009) in the construction of Yang-Baxter maps. Denoting by $u$ and $v$ the fields involved in this system, the latter has the following form
\begin{equation}
Q_{ij}(u,v_i,u_j,u_{ij})\,=\,0\,,\quad Q_{ij}(v,v_i,u_j,v_{ij})\,=\,0\,. \label{lift-ABS}
\end{equation}
A Lagrangian for the above system follows actually from (\ref{disLag1}) and it has the following form
\begin{eqnarray*}
{\cal{L}}(u,v,v_i,u_j) &=& \int^u\!\!\!\int^{v_i}\!\!\!\frac{\rd s \,\rd t}{h_i(s,t)}\,-\,\int^v\!\!\!\int^{u_j}\!\!\!\frac{\rd s \,\rd t}{h_j(s,t)}\,+\,\int^{v_i}\!\!\!\int^{u_j}\!\!\!\frac{\rd s \,\rd t}{H_{ij}(s,t)}\\
 &&-\,\int^{v_i}\!\!\!F^0_{ij}(s) \rd s \,-\,\int^{u_j}\!\!\!F^0_{ij}(s) \rd s\,,
\end{eqnarray*}
where $F^0_{ij}$ is defined in (\ref{ABS-H0-3leg}). It is a straightforward calculation to prove that the variational derivatives of ${\cal{L}}(u,v,v_i,u_j)$ with respect to $u$ and $v$ vanish on solutions of $Q_{ij}(v_{-j},v_{i,-j},u,v_{i})=0$ and $Q_{ij}(u_{-i},v,u_{-i,j},u_{j})\,=\,0$, respectively.\hfill $\Box$

\section{Multidimensionally consistent continuous systems from symmetry reductions} \label{sec-Sigma}

All of the ABS equations admit a pair of extended generalized symmetries which play the role of master symmetries for the corresponding generalized symmetries (Rasin \& Hydon 2008, Xenitidis 2009). This pair of extended symmetries can be given as the differential operators
\begin{equation}
V_i =n_i \left(\frac{h_i(u,u_i)}{u_i-u_{-i}}-\frac{1}{2} \partial_{u_i}h_i(u,u_i)\right)\partial_u-\partial_{\alpha_i}\,.
\label{master-sym}
\end{equation}

Since the equations under consideration depend explicitly on the lattice parameters, the same holds for their solutions. Thus, it is interesting to consider solutions of 
the ABS equations remaining invariant under the action of the extended symmetries.
Such solutions satisfy the discrete equation under consideration and the system of differential-difference equations
\begin{equation}
\frac{\partial u}{\partial \alpha_i}\,+\,n_i\,\left(\frac{h_i(u,u_i)}{u_i-u_{-i}}\,-\,\frac{1}{2} \partial_{u_i}h_i(u,u_i)\right)\,=\,0\,,
\label{dif-dif-eq}
\end{equation}
From this overdetermined system Tsoubelis \& Xenitidis (2009) derived the following {\sl{system of partial differential equations}}
\begin{eqnarray}
\frac{\partial u_i}{\partial \alpha_j} &=& \frac{H_{ij}}{h_j}\,\frac{\partial u}{\partial \alpha_j}\,+\,n_j \,\nabla_j H_{ij}\,, \label{sys1} \\
\frac{\partial u_j}{\partial \alpha_i} &=& \frac{H_{ji}}{h_i}\,\frac{\partial u}{\partial \alpha_i}\,+\,n_i \,\nabla_i H_{ji}\,, \label{sys1a} \\
\frac{\partial^2 u}{\partial \alpha_i \partial \alpha_j} &=& A_{ij} \frac{\partial u}{\partial \alpha_i} \frac{\partial u}{\partial \alpha_j} + B_{ij} 
\frac{\partial u}{\partial \alpha_i} + B_{ji} \frac{\partial u}{\partial \alpha_j} + \Gamma_{ij}\,, \label{sys2}
\end{eqnarray}
where the arguments of $h_i(u,u_i)$, $h_j(u,u_j)$ and $H_{ij}(u_i,u_j)$ have been omitted and $n_i$, $n_j$ were scaled by 2. Moreover,
\begin{equation}
\nabla_i X\,:=\, \frac{\partial X}{\partial u_i}\, -\, \left( \partial_{u_i} \ln h_i \right)\, X \,=\, h_i\,\frac{\partial}{\partial u_i}\left( \frac{X}{h_i} \right)\,, \label{conne}
\end{equation}
and the coefficients in the last equation of the system are given by the following relations.
\begin{eqnarray}
&& A_{ij} \,:=\, \partial_u \ln h_i - \frac{1}{h_j} \nabla_i\, H_{ij} = \partial_u \ln h_j + \frac{1}{h_i} \nabla_j\, H_{ij}\,, \label{Aij} \\
&& B_{ij} \,:=\,-n_j\, \nabla_i \nabla_j \,H_{ij} = n_j \, \left(- h_j\, \partial_u \nabla_j 1 + \frac{1}{h_i} \nabla_j (h_j \,\nabla_j\, H_{ij}) \right),\quad  \label{Bij} \\
&&\Gamma_{ij} \,:=\, -n_i\, n_j\,\nabla_i (h_i \nabla_i \nabla_j\, H_{ij}) \,=\,n_i\, n_j \, \nabla_j (h_j \nabla_i \nabla_j \,H_{ij})\,, \label{Gij}
\end{eqnarray}
In what follows, system (\ref{sys1}--\ref{Gij}) will be denoted as $\Sigma_{ij}$. \\

\noindent {\bf{Remark 4.1.}} The derivation of $\Sigma_{ij}$ implies that $u_i$ and $u_j$ are related to $u$ by shifts. Actually, this relation can be forgotten and all the functions involved in $\Sigma_{ij}$ can be treated as independent, which is the point of view we adopt from now on: we consider $\Sigma_{ij}$ as a closed-form coupled system for the three functions $u$, $u_i$ and $u_j$. \hfill $\Box$ \\

\noindent {\bf{Remark 4.2.}} The two equivalent forms at the definitions (\ref{Aij})-(\ref{Gij}) of the coefficients $A_{ij}$, $B_{ij}$ and $\Gamma_{ij}$, respectively, follow from the two different 
ways which can be used to derive the second equation of the system and the fact that the master symmetries commute (Xenitidis 2009). The latter implies that these equalities 
are identities. In fact, the basic identity is relation (\ref{Aij}) from which (\ref{Bij},\ref{Gij}) follow. Specifically, the $\nabla_i$-derivative of (\ref{Aij}) leads to (\ref{Bij}). On the 
other hand, multiplying the latter by $h_i$ and taking the $\nabla_i$-derivative of the resulting equation, one arrives at the third identity (\ref{Gij}). \hfill $\Box$ \\

Some useful general properties of the operator $\nabla_i$ defined in (\ref{conne}) can be derived directly from its definition and are given in the following 
statement. 
\begin{proposition} \label{identities-nabla}
Let $h_i(u,u_i)$, $h_j(u,u_j)$ be any pair of smooth functions and $\nabla_i$ be defined as in (\ref{conne}). Then the following identities hold. \\
{\it{ i)}} $\nabla_i h_i \,=\,0$ identically; \\ 
{\it{ii)}} $\left[ \nabla_i\,,\, \nabla_j\right] X\,=\,0$, for any function $X$;\\  
{\it{ iii)}} $\partial_u \nabla_i \nabla_j Y + (\partial_u \partial_{u_i} \ln h_i) \nabla_j Y + (\partial_u \partial_{u_j} \ln h_j) \nabla_i Y \,=\,0$, for any function $Y$ independent of $u$.
\end{proposition}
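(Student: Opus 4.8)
The plan is to prove each of the three identities directly from the definition
\[
\nabla_i X\,=\,\frac{\partial X}{\partial u_i}\,-\,\left(\partial_{u_i}\ln h_i\right)X\,=\,h_i\,\frac{\partial}{\partial u_i}\!\left(\frac{X}{h_i}\right),
\]
relying only on the chain rule and on the fact that $h_i$ depends on $(u,u_i)$ while $h_j$ depends on $(u,u_j)$. The crucial structural feature I would exploit throughout is that $\partial_{u_i}$ and $\partial_{u_j}$ commute and act on disjoint shifted variables, so that $\nabla_i$ interacts with $h_j$ only through the common variable $u$.

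For part \emph{i)} I would simply substitute $X=h_i$ into the second (factored) form of the definition, giving $\nabla_i h_i = h_i\,\partial_{u_i}(h_i/h_i)=h_i\,\partial_{u_i}(1)=0$ identically. For part \emph{ii)} the cleanest route is to write $\nabla_i X = \partial_{u_i}X - (\partial_{u_i}\ln h_i)X$ and expand the commutator $[\nabla_i,\nabla_j]X$ term by term. The pure second-derivative contributions cancel because $\partial_{u_i}\partial_{u_j}=\partial_{u_j}\partial_{u_i}$; the cross terms linear in $X$ organize themselves into $-\bigl(\partial_{u_i}\partial_{u_j}\ln h_i - \partial_{u_j}\partial_{u_i}\ln h_j\bigr)X$ plus mixed first-order pieces, and the key observation is that $\partial_{u_j}\ln h_i=0$ and $\partial_{u_i}\ln h_j=0$ since $h_i$ is independent of $u_j$ and $h_j$ of $u_i$. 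These vanishing cross-derivatives kill precisely the surviving terms, leaving $[\nabla_i,\nabla_j]X=0$.

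For part \emph{iii)}, with $Y$ independent of $u$, I would compute $\nabla_i\nabla_j Y$ explicitly and then apply $\partial_u$. Because $Y$ does not depend on $u$, the only $u$-dependence in $\nabla_i\nabla_j Y$ enters through the connection coefficients $\partial_{u_i}\ln h_i$ and $\partial_{u_j}\ln h_j$. Differentiating with respect to $u$ and using $\partial_u\partial_{u_i}\ln h_i$ and $\partial_u\partial_{u_j}\ln h_j$ as the natural coefficients, one should find that the result reassembles exactly into
\[
\partial_u\nabla_i\nabla_j Y + (\partial_u\partial_{u_i}\ln h_i)\,\nabla_j Y + (\partial_u\partial_{u_j}\ln h_j)\,\nabla_i Y\,=\,0.
\]
The main obstacle is bookkeeping in parts \emph{ii)} and \emph{iii)}: the commutator and the third identity each generate several terms, and one must track carefully which mixed derivatives vanish by the disjoint-variable dependence and which connection-coefficient derivatives combine. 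The conceptual content is light, but the cancellations must be arranged precisely; the guiding principle in every case is that $h_i$ and $h_j$ share only the variable $u$, so that any derivative of $\ln h_i$ with respect to $u_j$ (or of $\ln h_j$ with respect to $u_i$) is identically zero, and this is what forces all three identities.
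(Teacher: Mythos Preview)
Your proposal is correct and matches the paper's own treatment: the paper gives no detailed proof of this proposition, stating only that the identities ``can be derived directly from its definition'', which is precisely what you do. Your key structural observation---that $h_i$ depends only on $(u,u_i)$ and $h_j$ only on $(u,u_j)$, so that $\partial_{u_j}\ln h_i=\partial_{u_i}\ln h_j=0$---is exactly the mechanism driving the cancellations in parts \emph{ii)} and \emph{iii)}, and your outlined bookkeeping goes through as stated.
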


An important property of $\Sigma_{ij}$ is its multidimensional consistency, i.e. the compatibility conditions among the equations of systems $\Sigma_{ij}$, $\Sigma_{j\ell}$ and $\Sigma_{\ell i}$ lead to no further restrictions. To make more precise what we mean by a multidimensionally consistent system of PDEs, we give the following definition in a way parallel to Definition \ref{MDC-dis} of Section \ref{sec-Klein} for the discrete case. 
\begin{Def} 
Let $F(u,\partial_{\alpha_i}u,\partial_{\alpha_j}u,\partial_{\alpha_i} \partial_{\alpha_j}u)=0$ denote a system of differential equations.
Imposing the dependence of $u$ on a third continuous variable $\alpha_k$ such that 
$$F(u,\partial_{\alpha_j}u,\partial_{\alpha_k}u,\partial_{\alpha_j} \partial_{\alpha_k}u)=0\,,\quad 
F(u,\partial_{\alpha_k}u,\partial_{\alpha_i}u,\partial_{\alpha_k} \partial_{\alpha_i}u)=0\,, $$
if the three different ways to evaluate $\partial_{\alpha_i} \partial_{\alpha_j} \partial_{\alpha_k} u$ lead to the same result, then we call 
the system of PDEs defined by $F$ multidimensionally consistent. 
\end{Def} 

The multidimensional consistency of $\Sigma_{ij}$ can be checked by straightforward calculations and is equivalent to the identity
\begin{eqnarray}
T_{ij\ell}&:=&\left(H_{j\ell} \partial_{u_\ell} H_{\ell i}- H_{\ell i} \partial_{u_\ell} H_{j\ell}\right) + 
 \left(H_{\ell i} \partial_{u_i} H_{ij}-H_{ij} \partial_{u_i} H_{\ell i}\right) \nonumber \\
  &&+ \left(H_{ij} \partial_{u_j} H_{j\ell}-H_{j \ell} \partial_{u_j} H_{ij}\right)\,\equiv\,0\,, \label{relG}
\end{eqnarray}
which can be checked case by case for all the ABS equations and their corresponding functions $H_{ij}$.

To see how (\ref{relG}) expresses the multidimensional consistency of the ABS equations and their continuous counterparts $\Sigma_{ij}$, consider the compatibility condition
$$\frac{\partial}{\partial \alpha_\ell }\left( \frac{\partial u_i}{\partial \alpha_j} \right)\,-\,\frac{\partial}{\partial \alpha_j }\left( \frac{\partial u_i}{\partial \alpha_\ell} \right)\,=\,0\,,$$
which must hold on the solutions of $\Sigma_{ij}$, $\Sigma_{j\ell}$ and $\Sigma_{\ell i}$. The latter relation, when it is written out explicitly using the equations constituting these systems, leads to
\begin{equation*}
\frac{1}{h_j h_\ell}\,\left(  \frac{\partial u}{\partial \alpha_j}\frac{\partial u}{\partial \alpha_\ell} + \frac{n_\ell}{2} \frac{\partial u}{\partial \alpha_j} h_\ell \nabla_\ell + \frac{n_j}{2} \frac{\partial u}{\partial \alpha_\ell} h_j \nabla_j + \frac{n_j n_\ell}{4} h_j h_\ell \nabla_j \nabla_\ell\right) T_{ij\ell} \,=\,0\,, \label{compat-cond-1}
\end{equation*}
which holds in view of (\ref{relG}). In the same fashion, the three different ways to evaluate $\partial_{\alpha_i}\partial_{\alpha_j}\partial_{\alpha_k} u$ and their consistency lead to similar expressions which are satisfied whenever $T_{ijk} \equiv 0$. In the following section where Lagrangian functions will be constructed for the system $\Sigma_{ij}$ we will see that the latter identity also ensures that these Lagrangians satisfy the closure relation. \\

\noindent {\bf{Remark 4.3.}} As for the discrete ABS equations, where a Lax pair follows from multidimensional consistency (Bobenko \& Suris 2002, Nijhoff 2002), the multidimensional consistency of $\Sigma_{ij}$ can be used to derive an auto-B{\"a}cklund transformation and a Lax pair for this system (Tsoubelis \& Xenitidis 2009). To present them, it is more convenient to introduce the following notation:
$$\tilde{H}_i \,:=\, \frac{H(u_i,{\tilde{u}},\alpha_i,\lambda)}{\kappa(\alpha_i,\lambda)}\,,$$
i.e. $H_{ij}$ with $u_j$ and $\alpha_j$ replaced by $\tilde{u}$ and $\lambda$, respectively. Now, the auto-B{\"a}cklund transformation can be written as
\begin{eqnarray*}
&& \frac{\partial \tilde{u}}{\partial \alpha_i}\,=\,\frac{\tilde{H}_i}{h_i}\,\frac{\partial u}{\partial \alpha_i}\,+\, n_i\, \nabla_i \tilde{H}_i \,,\quad  \frac{\partial \tilde{u}}{\partial \alpha_j}\,=\,\frac{\tilde{H}_j}{h_j}\,\frac{\partial u}{\partial \alpha_j}\,+\, n_j\, \nabla_j \tilde{H}_j \,,\\ 
&& Q(u,u_i,\tilde{u},\tilde{u}_i;\alpha_i,\lambda)\,=\,0 \,, \quad \quad Q(u,u_j,\tilde{u},\tilde{u}_j;\alpha_j,\lambda)\,=\,0 \,,
\end{eqnarray*}
where the last two equations coincide with the discrete equation related to $\Sigma_{ij}$.

Moreover, a Lax pair for $\Sigma_{ij}$ has the following form
$$\frac{\partial \Phi}{\partial \alpha_i} \,=\, {\rm{M}}_{i}\, \Phi\,,\quad \frac{\partial \Phi}{\partial \alpha_j} \,=\, {\rm{M}}_{j}\, \Phi $$
where 
$${\rm{M}}_{i}\,:=\,\left(\begin{array}{cc}
-\frac{1}{2} A^\prime _i & -\frac{1}{2} A^{\prime\prime}_i \\
A_i & \frac{1}{2} A_i^\prime \end{array} \right)\,,$$
with
$$A_i \,:=\,\frac{\tilde{H}_i}{ h_i}\,\frac{\partial u}{\partial \alpha_i}\,+\, n_i \, \nabla_i \tilde{H}_i $$
and prime denoting differentiation with respect to $\tilde{u}$. Also, in the matrices ${\rm{M}}_i$, the function $A_i$ and its derivatives are evaluated at $\tilde{u}=0$. \hfill $\Box$

\section{Lagrangian formulation of the differential-difference and PDE system} \label{sec-con-Lagr}

In this section, we will show that there exist Lagrange structures for the differential-difference equations (\ref{dif-dif-eq}) and for the PDE system $\Sigma_{ij}$, where 
in addition the latter satisfies a closure relation. We will start by presenting the Lagrangian for the system defined by the differential-difference equation \eqref{dif-dif-eq}. 

\begin{Pro}
The following action functional 
\begin{eqnarray} 
&& S[u(n_i;\alpha_i)] = \sum_{n_i}\int \left\{ \frac{\partial u_{-i}}{\partial\alpha_i}\int^u_{u^0} \frac{dx}{h_{i}(u_{-i},x)}   
-\frac{\partial u^0_{i}}{\partial\alpha_i}\int^u_{u^0} \frac{dx}{h_{i}(x,u^0_{i})} \right. \nonumber \\ 
&&\qquad  + \left. n_i\left[\ln(u_i-u_{-i})-\frac{1}{2}\ln (h_i(u,u_i)) \right]-\frac{1}{2}\ln(h_i(u,u_i^0))\right\} \rd \alpha_i   \label{dif-dif-act} 
\end{eqnarray} 
forms an action for the differential-difference system \eqref{dif-dif-eq} in the sense that the EL equations 
are satisfied if $u=u(n_i,\alpha_i)$ is a solution of the latter. 
\end{Pro}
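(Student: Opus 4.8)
The plan is to compute the variational derivative $\delta S/\delta u$ of \eqref{dif-dif-act} head-on and to show that, after all simplifications, it is a linear combination of nearest-neighbour shifts of the left-hand side of \eqref{dif-dif-eq}; it then vanishes automatically on any solution of that differential-difference equation. Denote the integrand of \eqref{dif-dif-act} by a density $\mathcal{L}$, which depends on $u$, on its neighbours $u_i$ and $u_{-i}$, and---through the very first term---on the single velocity $\partial_{\alpha_i}u_{-i}$; the reference field $u^0$ and its shifts are inert. The corresponding discrete-continuous Euler--Lagrange operator is
$$\frac{\delta S}{\delta u}=\partial_u\mathcal{L}+\mathcal{S}_i^{-1}\big(\partial_{u_i}\mathcal{L}\big)+\mathcal{S}_i\big(\partial_{u_{-i}}\mathcal{L}\big)-\partial_{\alpha_i}\mathcal{S}_i\big(\partial_{\partial_{\alpha_i}u_{-i}}\mathcal{L}\big),$$
where the explicit coefficient $n_i$ is shifted together with the fields under $\mathcal{S}_i^{\pm1}$. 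First I would record the three field-derivative pieces and observe that every contribution carrying the reference field $u^0_i$ is cancelled in pairs by boundary terms coming from the momentum term, so that the result is independent of the arbitrary function $u^0$, as it must be.

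The delicate step is the momentum contribution $-\partial_{\alpha_i}\int^{u_i}_{u^0_i}\rd x/h_i(u,x)$. Differentiating in $\alpha_i$ produces (i) boundary terms at the two limits, (ii) an integral weighted by $\partial_{\alpha_i}u$ arising from the field dependence $u=u(\alpha_i)$, and (iii) an integral picking up the \emph{explicit} $\alpha_i$-dependence of the biquadratic $h_i$. Term (ii) cancels exactly against the integral produced by $\mathcal{S}_i(\partial_{u_{-i}}\mathcal{L})$. To dispose of (iii) I would invoke Lemma \ref{f-omega-relations}, $2\,\partial_{\alpha_i}h_i(u,x)=-\mathcal{D}_{u,x}(h_i(u,x))$: since $\mathcal{D}_{u,x}(h_i)/h_i^2=\partial_x\big(\partial_u h_i(u,x)/h_i(u,x)\big)$, the explicit-parameter integral collapses to a total $x$-derivative, hence to a boundary term. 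I expect this to be the main obstacle, both because it is here that one must scrupulously separate the explicit and implicit $\alpha_i$-dependence of $h_i$, and because it is the only place where the distinctive structure of the ABS biquadratics (through the Lemma) enters.

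After these cancellations the surviving expression reads
$$\frac{\delta S}{\delta u}=\frac{\partial_{\alpha_i}u_{-i}}{h_i(u_{-i},u)}+\frac{n_i-1}{u-u_{-2i}}-\frac{n_i-1}{2}\,\frac{\partial_u h_i(u_{-i},u)}{h_i(u_{-i},u)}-\frac{\partial_{\alpha_i}u_i}{h_i(u,u_i)}-\frac{n_i+1}{u_{2i}-u}-\frac{n_i+1}{2}\,\frac{\partial_u h_i(u,u_i)}{h_i(u,u_i)}.$$
The final step is to recognise this as $\mathcal{S}_i^{-1}\mathcal{E}/h_i(u_{-i},u)-\mathcal{S}_i\mathcal{E}/h_i(u,u_i)$, where $\mathcal{E}$ denotes the left-hand side of \eqref{dif-dif-eq}. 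The first three terms are literally $\mathcal{S}_i^{-1}\mathcal{E}$ divided by $h_i(u_{-i},u)$; matching the last three with $-\mathcal{S}_i\mathcal{E}/h_i(u,u_i)$ reduces to the secant identity
$$\frac{h_i(u_i,u_{2i})-h_i(u,u_i)}{u_{2i}-u}=\frac{1}{2}\big(\partial_u h_i(u,u_i)+\partial_{u_{2i}}h_i(u_i,u_{2i})\big),$$
which holds because $h_i$ is a symmetric biquadratic, so $t\mapsto h_i(u_i,t)$ is quadratic and its secant slope equals the average of the endpoint derivatives. Since $\mathcal{S}_i^{\pm1}\mathcal{E}$ vanish identically whenever $u$ solves \eqref{dif-dif-eq}, the Euler--Lagrange equations are satisfied on solutions of \eqref{dif-dif-eq}, as claimed.
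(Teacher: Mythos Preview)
Your proposal is correct and follows essentially the same route as the paper: a direct computation of the Euler--Lagrange derivative, crucial use of Lemma~\ref{f-omega-relations} to collapse the explicit $\alpha_i$-dependence of $h_i$ in the momentum term, and recognition of the result as a combination of forward and backward shifts of \eqref{dif-dif-eq} via the symmetric-biquadratic identity. Your secant identity is exactly the paper's identity $\tfrac{h_i(u,u_i)}{u_i-u_{-i}}-\tfrac12\partial_{u_i}h_i(u,u_i)=\tfrac{h_i(u,u_{-i})}{u_i-u_{-i}}+\tfrac12\partial_{u_{-i}}h_i(u,u_{-i})$ shifted once in $n_i$, and you are in fact more explicit than the paper in naming the precise combination as $\mathcal{S}_i^{-1}\mathcal{E}/h_i(u_{-i},u)-\mathcal{S}_i\mathcal{E}/h_i(u,u_i)$ and in spelling out why $\mathcal{D}_{u,x}(h_i)/h_i^2=\partial_x\bigl(\partial_u h_i/h_i\bigr)$ turns the parameter integral into a boundary term.
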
 
\begin{proof} The proof is by direct computation, and uses crucially the relation \eqref{h-a-der}, as well as the identity 
$$ \frac{h_i(u,u_i)}{u_i-u_{-i}}\,-\,\frac{1}{2} \partial_{u_i}h_i(u,u_i)=\frac{h_i(u,u_{-i})}{u_i-u_{-i}}\,+\,\frac{1}{2} \partial_{u_{-i}}h_i(u,u_{-i})\,, 
$$
which holds for general symmetric bi-quadratics $h_i$.  Thus, we obtain the EL equations in the following form:  
\begin{eqnarray}
&&\frac{\partial{\cal L}}{\partial u} = \frac{1}{h_i(u,u_{-i})}\frac{\partial u_{-i}}{\partial\alpha_i}-\frac{1}{h_{i}(u,u_{i})}
\frac{\partial u_{i}}{\partial\alpha_i} \nonumber \\ 
&&\quad +\frac{n_i+1}{u-u_{i,i}}+\frac{n_i-1}{u-u_{-i,-i}}-\frac{1}{2}\frac{n_i+1}{h_i(u,u_i)}\partial_u h_i(u,u_i)
-\frac{1}{2}\frac{n_i-1}{h_{i}(u,u_{-i})}\partial_u h_{i}(u,u_{-i})   \nonumber 
\end{eqnarray}
where ${\cal L}$ denotes the expression between curly brackets on the right-hand side of \eqref{dif-dif-act}, 
which is a combination of two copies of the differential-difference equation \eqref{dif-dif-eq} one shifted forward, and one shifted backward in the 
lattice variable $n_i$. 
\end{proof}

The multidimensional consistency of $\Sigma_{ij}$ can be established on the level of its Lagrangian formulation by proving the closure relation for the corresponding Lagrangian 2-form structure in accordance with its discrete counterpart.

\begin{theorem}
A Lagrangian of system $\Sigma_{ij}$ is given by
\begin{eqnarray}\label{sys-Lag}
&& {\cal{L}}_{ij} = \frac{1}{h_i}\frac{\partial u}{\partial \alpha_i} \frac{\partial u_i}{\partial \alpha_j}-\frac{1}{h_j}\frac{\partial u}{\partial \alpha_j} \frac{\partial u_j}{\partial \alpha_i}-\frac{H_{ij}}{h_i\, h_j}\frac{\partial u}{\partial \alpha_i} \frac{\partial u}{\partial \alpha_j} - n_i  n_j\, \nabla_i \nabla_j \,H_{ij} \\
&&{\phantom{{\cal{L}}_{ij} =}} - n_j  \left(\frac{1}{h_i} \nabla_j \, H_{ij} + \partial_u \ln h_j \right)\,\frac{\partial u}{\partial \alpha_i}   \nonumber  +  n_i \left(\,\frac{1}{h_j} \nabla_i \,H_{j i} + \partial_u \ln h_i \right)\,\frac{\partial u}{\partial \alpha_j}\nonumber
 \end{eqnarray}
and is antisymmetric: ${\cal{L}}_{ji} \,=\, -\,{\cal{L}}_{ij}$.

Moreover, Lagrangians ${\cal{L}}_{ij}$, ${\cal{L}}_{j\ell}$ and ${\cal{L}}_{\ell i}$ satisfy the closure relation
\begin{equation}
{\rm{D}}_{\alpha_i} {\cal{L}}_{j \ell}\,+\,{\rm{D}}_{\alpha_j} {\cal{L}}_{\ell i}\,+\,{\rm{D}}_{\alpha_\ell} {\cal{L}}_{ij}\,=\,0 \label{closure-cont}
\end{equation}
on solutions of systems $\Sigma_{ij}$, $\Sigma_{j \ell}$ and $\Sigma_{\ell i}$.
\end{theorem}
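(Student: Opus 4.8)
The plan is to prove the two assertions separately. The antisymmetry ${\cal{L}}_{ji}=-{\cal{L}}_{ij}$ should follow by inspection: since $\kappa_{ji}=-\kappa_{ij}$ one has $H_{ji}=-H_{ij}$, and interchanging the roles of $i$ and $j$ in \eqref{sys-Lag} swaps the first two terms (with a sign), flips the sign of the third term through $H_{ij}\to H_{ji}=-H_{ij}$, and exchanges the last two terms; collecting these I expect the whole expression to change sign. I would verify this term-by-term, paying attention to the identities $[\nabla_i,\nabla_j]=0$ and $\nabla_i\nabla_j H_{ij}$, which is manifestly antisymmetric under $i\leftrightarrow j$ together with $H_{ij}\to-H_{ij}$.

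For the closure relation \eqref{closure-cont}, denote the left-hand side by ${\cal{C}}$ and proceed exactly as in the discrete case (Theorem \ref{Lang-clos-rel-ABS}): rather than expanding ${\cal{C}}$ directly, I would show that ${\cal{C}}$ vanishes on solutions of $\Sigma_{ij}$, $\Sigma_{j\ell}$, $\Sigma_{\ell i}$ by first computing the total $\alpha$-derivatives ${\rm{D}}_{\alpha_k}{\cal{L}}_{ij}$. Each such derivative replaces the second-order derivatives $\partial_{\alpha_i}\partial_{\alpha_k}u$, $\partial_{\alpha_j}\partial_{\alpha_k}u$, $\partial_{\alpha_k}u_i$, $\partial_{\alpha_k}u_j$ etc.\ by their expressions from the systems \eqref{sys1}--\eqref{sys2}; here the explicit form of the coefficients $A_{ij}$, $B_{ij}$, $\Gamma_{ij}$ in \eqref{Aij}--\eqref{Gij}, and the two equivalent forms guaranteed by Remark 4.2, will be essential for simplifying cross-terms. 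After this substitution ${\cal{C}}$ becomes a function of $u$, $u_i$, $u_j$, $u_\ell$ and the first-order $\alpha$-derivatives $\partial_{\alpha_i}u$, $\partial_{\alpha_j}u$, $\partial_{\alpha_\ell}u$ only, with no remaining higher derivatives.

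The key structural step is then to organize ${\cal{C}}$ as a polynomial in the three independent quantities $\partial_{\alpha_i}u$, $\partial_{\alpha_j}u$, $\partial_{\alpha_\ell}u$ and to show that every coefficient vanishes identically. I expect the coefficients to assemble into expressions of exactly the cyclic form appearing in $T_{ij\ell}$ of \eqref{relG}; the quadratic-in-derivatives part should be proportional to $T_{ij\ell}$ itself, while the lower-order coefficients should be proportional to $\nabla$-derivatives $\nabla_k T_{ij\ell}$ and $\nabla_j\nabla_\ell T_{ij\ell}$, precisely as in the compatibility calculation displayed before Remark 4.3. Since Section \ref{sec-Sigma} establishes $T_{ij\ell}\equiv 0$ case-by-case for all ABS equations, each coefficient vanishes and hence ${\cal{C}}\equiv 0$ on solutions. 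Throughout, the identities of Proposition \ref{identities-nabla}—especially $\nabla_i h_i=0$, the commutativity $[\nabla_i,\nabla_j]=0$, and item iii)—together with Lemma \ref{f-omega-relations} relating the $\alpha_k$-derivatives of $h_i$ and $H_{ij}$ to the Wronskian operator ${\cal{D}}_{u,x}$, are what convert the raw $\alpha$-derivatives into $\nabla$-algebraic expressions in the first place.

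The main obstacle I anticipate is bookkeeping in the second step: the total derivatives ${\rm{D}}_{\alpha_k}{\cal{L}}_{ij}$ generate a large number of terms, and showing that they collapse onto multiples of $T_{ij\ell}$ and its $\nabla$-derivatives requires repeated, delicate use of the relations \eqref{Aij}--\eqref{Gij} and of $2\partial_{\alpha_i}H_{ij}=\pm{\cal{D}}_{u,x}(H_{ij})$. The cleanest route is likely to fix the cyclic ordering $(i,j,\ell)$, compute the coefficient of $\partial_{\alpha_j}u\,\partial_{\alpha_\ell}u$ first (the purely quadratic term, where the $A_{ij}$ contributions dominate), confirm it equals a constant multiple of $T_{ij\ell}$, and then differentiate that identity with $\nabla$-operators to dispatch the remaining coefficients in one stroke, exploiting item iii) of Proposition \ref{identities-nabla} to control the $u$-dependence—mirroring how \eqref{Bij} and \eqref{Gij} were obtained from \eqref{Aij} in Remark 4.2.
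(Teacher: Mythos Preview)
Your proposal addresses only two of the three claims in the theorem. The statement asserts first that ${\cal{L}}_{ij}$ \emph{is a Lagrangian of the system} $\Sigma_{ij}$, i.e.\ that the Euler--Lagrange equations obtained by varying with respect to each of the three fields $u$, $u_i$, $u_j$ are satisfied on solutions of \eqref{sys1}--\eqref{sys2}. You do not discuss this at all. In the paper this is the bulk of the argument: one computes $\delta{\cal{L}}_{ij}/\delta u_i$ and $\delta{\cal{L}}_{ij}/\delta u_j$ and shows each equals $\pm h^{-1}$ times the second-order equation \eqref{sys2}, while $\delta{\cal{L}}_{ij}/\delta u$ reduces, after substituting \eqref{sys1}--\eqref{sys1a} for $\partial_{\alpha_j}u_i$ and $\partial_{\alpha_i}u_j$, to the expression in item~iii) of Proposition~\ref{identities-nabla} with $Y=H_{ij}$, hence vanishes. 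Without this step the theorem is not proved.

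Your treatment of the closure relation is essentially the paper's approach and is sound, with two minor corrections. First, in the paper the second-order derivatives of $u_i$, $u_j$, $u_\ell$ cancel \emph{structurally} in ${\cal{R}}$ before any use of the equations (they pair off between the three cyclic summands), whereas the coefficients of $\partial_{\alpha_i}\partial_{\alpha_j}u$ etc.\ vanish only after invoking the first-order equations \eqref{sys1}--\eqref{sys1a}; your description blurs this distinction slightly. Second, the residual expression is not merely quadratic in the first derivatives of $u$: there is a cubic term $(h_ih_jh_\ell)^{-1}\,\partial_{\alpha_i}u\,\partial_{\alpha_j}u\,\partial_{\alpha_\ell}u\,T_{ij\ell}$, then quadratic terms carrying a single $\nabla$, linear terms carrying $\nabla\nabla$, and a zeroth-order term $n_in_jn_\ell\,\nabla_i\nabla_j\nabla_\ell T_{ij\ell}$. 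Your idea of obtaining the lower coefficients by applying $\nabla$-operators to the top one is exactly right and mirrors the paper.
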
 

\begin{proof}
The variational derivatives of ${\cal{L}}_{ij}$ with respect to $u_i$ and $u_j$ yield the second equation of $\Sigma_{ij}$, while its variational derivative with respect to $u$ leads to a linear combination of the two first equations of $\Sigma_{ij}$. Indeed, consider the variational derivative of ${\cal{L}}_{ij}$ with respect to $u_i$, i.e.
$$\frac{\delta {\cal{L}}_{ij}}{\delta u_i} = \frac{\partial}{\partial \alpha_j}\left( \frac{1}{h_i}\frac{\partial u}{\partial \alpha_i} \right) - \frac{\partial {\cal{L}}_{ij}}{\partial u_i}\,. $$
Writing out explicitly the above relation, the terms involving the derivative of $u_i$ with respect to $\alpha_j$ cancel out. Factorizing the remaining terms, one arrives at 
$$\frac{\delta {\cal{L}}_{ij}}{\delta u_i} \,=\, \frac{1}{h_i} \left(\frac{\partial^2 u}{\partial \alpha_i \partial \alpha_j} - A_{ij} \frac{\partial u}{\partial \alpha_i} \frac{\partial u}{\partial \alpha_j} - B_{ij} \frac{\partial u}{\partial \alpha_i} - B_{ji} \frac{\partial u}{\partial \alpha_j} - \Gamma_{ij} \right)\,,$$
which is identically zero in view of the third equation of $\Sigma_{ij}$.

Similarly, the variational derivative of the Lagrangian with respect to $u_j$ takes the form
$$\frac{\delta {\cal{L}}_{ij}}{\delta u_j} \,=\, \frac{-1}{h_j} \left(\frac{\partial^2 u}{\partial \alpha_i \partial \alpha_j} - A_{ij} \frac{\partial u}{\partial \alpha_i} \frac{\partial u}{\partial \alpha_j} - B_{ij} \frac{\partial u}{\partial \alpha_i} - B_{ji} \frac{\partial u}{\partial \alpha_j} - \Gamma_{ij} \right)\,=\,0\,.$$

For the variational derivative with respect to $u$, we have
\begin{eqnarray*}
\frac{\delta {\cal{L}}_{ij}}{\delta u} &=& \frac{\partial}{\partial \alpha_i} \left\{\frac{1}{h_i} \left(\frac{\partial u_i}{\partial \alpha_j} - \frac{H_{ij}}{h_j}\,\frac{\partial u}{\partial \alpha_j}\,-\,n_j \,\nabla_j \,H_{ij} \right) - n_j \,\partial_u \ln h_j \right\}\\
&+&\frac{\partial}{\partial \alpha_j} \left\{\frac{-1}{h_j} \left(\frac{\partial u_j}{\partial \alpha_i} + \frac{H_{ij}}{h_i}\,\frac{\partial u}{\partial \alpha_i}\,-\,n_i \,\nabla_i \,H_{ij} \right) + n_i\,\partial_u \ln h_i \right\}- \frac{\partial {\cal{L}}_{ij}}{\partial u}\,.
\end{eqnarray*}
The terms in the parentheses are the two first equations of the system, thus the above equation becomes
$$
\frac{\delta {\cal{L}}_{ij}}{\delta u} \,=\, \frac{\partial}{\partial \alpha_i} \left(- n_j \partial_u \ln h_j \right) + \frac{\partial}{\partial \alpha_j} \left(n_i  \partial_u \ln h_i \right) \,-\, \frac{\partial {\cal{L}}_{ij}}{\partial u}\,.
$$
Writing out explicitly the right hand side of the last relation, we use the two first equations of the system to substitute the derivatives of $u_i$ and $u_j$. After these substitutions, the terms involving the derivatives of $u$ cancel out and we arrive at
$$
\frac{\delta {\cal{L}}_{ij}}{\delta u} \,=\,n_i\, n_j\Big( \partial_u \nabla_i \nabla_j \,H_{ij} + (\partial_u \partial_{u_i} \ln h_i) \nabla_j \,H_{ij} + (\partial_u \partial_{u_j} \ln h_j) \nabla_i \,H_{ij} \Big)\,.
$$
Using the third identity of Proposition \ref{identities-nabla}, one easily concludes that $\delta_u {\cal{L}}_{ij}=0$.

Denoting by $\cal{R}$ the left hand side of (\ref{closure-cont}), i.e.
$${\cal{R}}\,:=\, {\rm{D}}_{\alpha_i} {\cal{L}}_{j \ell}\,+\,{\rm{D}}_{\alpha_j} {\cal{L}}_{\ell i}\,+\,{\rm{D}}_{\alpha_\ell} {\cal{L}}_{ij}\,,$$
we will prove that ${\cal{R}}=0$ on solutions of $\Sigma_{ij}$, $\Sigma_{j\ell}$ and $\Sigma_{\ell i}$.

First, we work out explicitly $\cal{R}$ to show that the second order derivatives of $u_i$, $u_j$ and $u_\ell$ cancel out. Then, we factorize the remaining terms with respect to the derivatives of $u$, which leads to the following expression
\begin{eqnarray}
{\cal{R}} &=& \frac{\partial^2 u}{\partial \alpha_i \partial \alpha_j} \left( \frac{1}{h_j} \frac{\partial u_j}{\partial \alpha_\ell}-\frac{1}{h_i} \frac{\partial u_i}{\partial \alpha_\ell} + \Lambda_{\ell i} \frac{\partial u}{\partial \alpha_\ell}+ \Lambda_{j \ell} \frac{\partial u}{\partial \alpha_\ell} - \Xi_{i\ell} + \Xi_{j \ell}\right)\nonumber \\
&& +\,\,\,{\mbox{similar terms obtained by cycling permuting indices }}\,(i,\,j,\,\ell) \nonumber \\
&& + \,\,\,{\mbox{terms involving only first order derivatives of}}\,\,u\,, \label{cont-clos-lhs}
\end{eqnarray}
where
$$\Lambda_{ij} \,:=\, -\,\frac{H_{ij}}{h_i h_j}\,,\quad \Xi_{ij}\,:=\, -\,n_j \left(\frac{1}{h_i} \nabla_j \,H_{ij} +  \partial_u \ln h_j \right)\,.$$

The coefficient of $\partial_{\alpha_i} \partial_{\alpha_j} u$ in (\ref{cont-clos-lhs}) vanishes by taking into account the first order equations of $\Sigma_{j\ell}$ and $\Sigma_{\ell i}$. Due to the symmetry of $\cal{R}$, the coefficients of the remaining second order derivatives of $u$ also vanish.

On the other hand, the terms in (\ref{cont-clos-lhs}) involving only first order derivatives of $u$ take the following form:
\begin{eqnarray*}
&&\frac{1}{h_i h_j h_\ell} \frac{\partial u}{\partial \alpha_i} \, \frac{\partial u}{\partial \alpha_j}\, \frac{\partial u}{\partial \alpha_\ell} T_{ij\ell}  +  n_i n_j n_\ell\,\nabla_i\,\nabla_j\,\nabla_\ell\,T_{ij\ell} \\ 
&& + \left(\frac{n_\ell}{h_i h_j}\,\frac{\partial u}{\partial \alpha_i}\frac{\partial u}{\partial \alpha_j} \nabla_\ell + \frac{n_i}{h_j h_\ell}\,\frac{\partial u}{\partial \alpha_j}\frac{\partial u}{\partial \alpha_\ell} \nabla_i +  
\frac{n_j}{h_i h_\ell}\,\frac{\partial u}{\partial \alpha_\ell}\frac{\partial u}{\partial \alpha_i} \nabla_j \right) T_{ij \ell}\\
&& + \left(\frac{n_j n_\ell}{h_i}\,\frac{\partial u}{\partial \alpha_i} \nabla_j \nabla_\ell + 
\frac{n_\ell n_i}{h_j}\,\frac{\partial u}{\partial \alpha_j} \nabla_\ell \nabla_i +
\frac{n_i n_j}{h_\ell}\,\frac{\partial u}{\partial \alpha_\ell} \nabla_i \nabla_j \right)\, T_{ij\ell}.
\end{eqnarray*}
The multidimensional consistency of the continuous systems, i.e. identity (\ref{relG}), implies that the above expression vanishes on the solutions of systems $\Sigma_{ij}$, $\Sigma_{j\ell}$ and $\Sigma_{\ell i}$ and, subsequently, the closure relation (\ref{closure-cont}) is satisfied.
\end{proof}

\begin{acknowledgements}
P.X. is supported by the {\emph{Newton International Fellowship}} grant NF082473 entitled
``Symmetries and integrability of lattice equations and related partial differential equations'', which is run by The British Academy, The Royal Academy of Engineering and The Royal Society. S.L. was supported by the UK Engineering and Physical Sciences Research Council (EPSRC). 
\end{acknowledgements}

\appendix{The list of the ABS equations and their characteristic polynomials} \label{fGk}

The following constitutes the list of scalar affine-linear multidimensionally consistent quadrilateral lattice equations up to M\"obius transformations that was established by Adler, Bobenko \& Suris (2003). The equations of the form A1, A2 of their paper are omitted as they are related to other members of the list by point transformations. The form of Q4 given below was established by Hietarinta (2005), in which ${\rm{sn}}$ denotes the Jacobi elliptic function ${\rm{sn}}(x|k)$ with modulus $k$.

\begin{tabular}{lr} 
& \\
$(u -u_{ij})\, (u_i-u_j)\, -\,\alpha_i \,+ \, \alpha_j \, = \,0$ & (H1) \\
&  \\
$(u -u_{ij})(u_i-u_j) +(\alpha_j-\alpha_i) (u +u_i+u_j+u_{ij})- \alpha_i^2 + \alpha_j^2 = 0$ & (H2) \\
& \\
${\rm{e}}^{-\alpha_i/2} (u u_i+u_j u_{ij}) - {\rm{e}}^{-\alpha_j/2} (u u_j+u_i u_{ij}) + \delta ({\rm{e}}^{-\alpha_i}-{\rm{e}}^{-\alpha_j}) = 0 $ & (H3)\\
& \\
\end{tabular} 

\begin{tabular}{lr} 
$\alpha_i (u -u_j) (u_i- u_{ij}) - \alpha_j (u - u_i) (u_j -u_{ij}) + \delta^2 \alpha_i \alpha_j (\alpha_i-\alpha_j)= 0$ & (Q1) \\
& \\
$\alpha_i (u -u_j) (u_i- u_{ij}) - \alpha_j (u - u_i) (u_j -u_{ij}) +$ & \\
$ \alpha_i \alpha_j (\alpha_i-\alpha_j) (u +u_i+u_j+u_{ij}) - \alpha_i \alpha_j (\alpha_i-\alpha_j) (\alpha_i^2-\alpha_i \alpha_j + \alpha_j^2) = 0$ & (Q2)\\
& \\
$({\rm{e}}^{-\alpha_j}-{\rm{e}}^{-\alpha_i}) (u u_{ij}+u_i u_j) + {\rm{e}}^{-\alpha_j/2} ({\rm{e}}^{-\alpha_i}-1) (u u_i+u_j u_{ij})$ & \\
$- {\rm{e}}^{-\alpha_i} ({\rm{e}}^{-\alpha_j}-1) (u u_j+u_i u_{ij}) - \frac{\delta^2 ({\rm{e}}^{-\alpha_i}-{\rm{e}}^{-\alpha_j}) ({\rm{e}}^{-\alpha_i}-1) ({\rm{e}}^{-\alpha_j}-1)}{4 {\rm{e}}^{-(\alpha_i +\alpha_j)/2}}=0 $ & (Q3) \\
& \\
$ {\rm{sn}}(\alpha_i) (u u_i+u_j u_{ij}) - {\rm{sn}}(\alpha_j) (u u_j+u_i u_{i j}) -{\rm{sn}}(\alpha_i-\alpha_j) (u u_{ij}+u_i u_j) +$ & \\
$ k\,{\rm{sn}}(\alpha_i)\, {\rm{sn}}(\alpha_j) \,{\rm{sn}}(\alpha_i-\alpha_j) \,(1 + u u_i u_j u_{ij}) = 0$ & (Q4)\\
& \\
\end{tabular}

The formulae for the canonical bi-quadratics $h$ and $H$, and the corresponding discriminant curves, for the ABS equations are collected in the following lists.
\begin{flushleft}
\begin{tabular}{llll}
{\it{H1}} : & $h_i = 1$ & $\kappa_{ij} = \alpha_j-\alpha_i$ & $H = (u_i-u_j)^2$  \\
{\it{H2}} : & $h_i = 2(u + u_i + \alpha_i)$  & $\kappa_{ij} = \alpha_j-\alpha_i$& $H = (u_i-u_j)^2 - (\alpha_i-\alpha_j)^2$  \\
{\it{H3}} : & $h_i = u u_i + {\rm{e}}^{-\alpha_i/2} \delta$  & $\kappa_{ij} = {\rm{e}}^{-\alpha_i}- {\rm{e}}^{-\alpha_j}$ & $H= {\rm{e}}^{-(\alpha_i+\alpha_j)/2} (u_i^2+u_j^2) $\\
& & & $\qquad -({\rm{e}}^{-\alpha_i}+{\rm{e}}^{-\alpha_j}) u_i u_j$ \\
& & & 
\end{tabular}

\begin{tabular}{cl}
{\it{Q1}} : &  $ h_i \,= \,((u-u_i)^2 - \alpha_i^2 \delta^2)/\alpha_i \,,\quad \kappa_{ij} \,=\, \alpha_i \alpha_j (\alpha_j-\alpha_i)$ \\
& \\
& $H = \alpha_i \alpha_j \left((u_i-u_j)^2 - (\alpha_i -\alpha_j)^2 \delta^2\right)$ \\
& \\
\end{tabular}

\begin{tabular}{cl}
{\it{Q2}} : & $ h_i = ((u-u_i)^2 - 2 \alpha_i^2 (u+u_i) + \alpha_i^4)/\alpha_i\,,\quad \kappa_{ij} = \alpha_i \alpha_j (\alpha_j-\alpha_i)$ \\
& \\
            & $ H = \alpha_i \alpha_j \left((u_i-u_j)^2 - 2 (\alpha_i -\alpha_j)^2 (u_i+u_j) + (\alpha_i-\alpha_j)^4\right)$\\
            & \\
\end{tabular}

\begin{tabular}{cl}
{\it{Q3}} : &  $h_i =\frac{\exp(-\alpha_i/2)}{1-\exp(-\alpha_i)}(u^2+u_i^2) - \frac{1+\exp(-\alpha_i)}{1-\exp(-\alpha_i)} u u_i  + \delta^2 \frac{1-\exp(-\alpha_i)}{4 \exp(-\alpha_i/2)}$ \\
    & \\        
            & $\kappa_{ij}\, =  \,\left[\exp(-\alpha_i) - 1\right]\, \left[\exp(-\alpha_j)-1\right]\,\left[\exp(-\alpha_i) - \exp(-\alpha_j)\right]$ \\
        & \\    
            & $H =({\rm{e}}^{-\alpha_i}-1)({\rm{e}}^{-\alpha_j}-1)\Big({\rm{e}}^{-(\alpha_i+\alpha_j)/2} (u_i^2+u_j^2) - ({\rm{e}}^{-\alpha_i}+{\rm{e}}^{-\alpha_j}) u_i u_j $ \\
            & $ {\phantom{G_{ij} =({\rm{e}}^{-\alpha_i}-1)\,({\rm{e}}^{-\alpha_j}-1)}} + \frac{1}{4}\, \delta^2 \, {\rm{e}}^{(\alpha_i+\alpha_j)/2}\, ({\rm{e}}^{-\alpha_i}-{\rm{e}}^{-\alpha_j})^2\, \Big)$\\
            & \\
\end{tabular}

\begin{tabular}{cl}
{\it{Q4}} : &  $h_i = -k\, {\rm{sn}}(\alpha_i) (1+u^2 u_i^2) + \frac{1}{{\rm{sn}}(\alpha_i)} \left( u^2 +u_i^2 - 2 {\rm{cn}}(\alpha_i) {\rm{dn}}(\alpha_i) u u_i \right)$\\
            & \\       
                   & $ \kappa_{ij} = \,{\rm{sn}}(\alpha_i){\rm{sn}}(\alpha_j){\rm{sn}}(\alpha_j-\alpha_i)$ \\
                & \\
                   & $H = {\rm{sn}}(\alpha_i){\rm{sn}}(\alpha_j) \left\{-k\, {\rm{sn}}^2\,(\alpha_i-\alpha_j) (1 +u_i^2 u_j^2) + u_i^2 + u_j^2\right\} + $ \\
                   & $\left({\rm{sn}}^2(\alpha_i-\alpha_j)- {\rm{sn}}^2(\alpha_i) - {\rm{sn}}^2(\alpha_j) + k^2 {\rm{sn}}^2(\alpha_i-\alpha_j) {\rm{sn}}^2(\alpha_i) {\rm{sn}}^2(\alpha_j) \right) u_i u_j$ \\
                   & \\
\end{tabular}
\end{flushleft}

\begin{tabular}{c|ccccccc}
 & {\it{H1}} & {\it{H2}} & {\it{H3}} & {\it{Q1}} & {\it{Q2}} & {\it{Q3}} & {\it{Q4}} \\ \hline
$r(u)$ & 0 & 4 & $u^2$ & $4 \delta^2$ & $16 u$ & $u^2 - \delta^2$ & $4 \left(k u^4-(1+k^2) u^2 + k \right)$ 
\end{tabular}

\appendix{Proof of Lemma \ref{f-omega-relations}}
\label{aDerivative}

As was proven in Tongas {\it{et al.}} (2007), all of the ABS equations admit three point generalized symmetries which have the following form
$$ K^{(1)}_i\,=\,P\big(h_i(u,u_i)\big)\,\partial_u \,:=\, \left\{\left( \frac{1}{u_i-u_{-i}}- \frac{1}{2} \partial_{u_i}\right) h_i(u,u_i)\right\}\,\partial_u\,.$$
Moreover, it was shown that any other three-point generalized symmetry $K$ is necessarily of the form $ K\,=\,\left\{a(n)\,K^{(1)}_i\,+\,\phi(n,m,u)\right\} \partial_u\,,$
where the functions $a(n)$, $\phi(n,m,u)$ are determined by the solutions of a linear equation. Additionally, it was proven that all of the ABS equations admit extended generalized symmetries, namely $V_i$ given in (\ref{master-sym}).

On the other hand, it was shown in Mikhailov {\it{et al.}} (2010) that a higher order generalized symmetry is given by
$$ K^{(2)}_i \,=\, \left(\frac{h_i(u,u_i) h_i(u,u_{-i})}{(u_i-u_{-i})^2}\,\left({\cal{S}}_i + {\cal{S}}_{i}^{-1} \right) \frac{1}{u_i-u_{-i}}\right) \,\partial_u\,.$$

The commutator of the extended symmetry $V_i$ with the first generalized symmetry $K_i^{(1)}$ yields
$$[{\bf{v}}_i,K^{(1)}_i]\,=\,-K^{(2)}_i\,+\,P\left(\partial_{\alpha_i} h_i(u,u_i) + \frac{1}{2} \left|\begin{array}{cc} h_i & \partial_u h_i \\ \partial_{u_i} h_i & \partial_u \partial_{u_i} h_i \end{array} \right|\right)\,,$$
which is another symmetry of the equation under consideration. Since $K^{(2)}$ is a symmetry of the equation, the same must hold for the remaining part of the above commutator. This means that the latter has to be of the general form $K$ which implies
$$\partial_{\alpha_i} h_i(u,u_i) + \frac{1}{2} \left|\begin{array}{cc} h_i & \partial_u h_i \\ \partial_{u_i} h_i & \partial_u \partial_{u_i} h_i \end{array} \right|\,=\,\mu(\alpha_i)\,h_i(u,u_i)\,.$$
But, for all of the ABS equations, it turns out that $\mu \equiv 0$.

\label{lastpage}

\end{document}